\makeatletter \@addtoreset{figure}{section}
\def\thefigure{\thesection.\@arabic\c@figure}
\def\fps@figure{h, t}
\def\thetable{\thesection.\@arabic\c@table}
\def\fps@table{h, t}
\newcommand\be{{\mathbf e}}
\newcommand\cM{{\mathcal M}}
\newcommand\field{\mathbb}
\newcommand\R{\field{R}}
\newcommand\C{\field{C}}
\newcommand\N{\field{N}}
\newcommand\diag{\operatorname{diag}}
\newcommand\id{\operatorname{\mathrm{Id}}}
\newcommand\rmd{\mathrm{d}}
\newcommand\rmi{\mathrm{i}\mspace{1mu}}
\newcommand\rme{\mathrm{e}}
\newcommand\pairing[2]{\langle {#1}, {#2}\rangle}
\newcommand\mtext[1]{\quad\text{#1}\quad}
\newcommand\Res{\operatorname{Res}}
\newtheorem{theorem}{Theorem}%[section]
\newtheorem{proposition}[theorem]{Proposition}
\newtheorem{lemma}[theorem]{Lemma}
\newfont{\tenbi}{cmbxti10}
\newcommand{\la}{\lambda}
\title{The generalized Euler-Poinsot rigid body equations: explicit elliptic 
solutions\footnote{{\small AMS Subject Classification 70E40, 70H06, 37J35,  33E05}} }
\author{Yuri N. Fedorov$^1$, 
  Andrzej J.~Maciejewski$^2$, and Maria Przybylska$^3$,   \\[1em]
   {}$^1$ Department de Matem\'atica Aplicada I, \\
Universitat Politecnica de Catalunya, \\
Barcelona, E-08028 Spain \\
e-mail: Yuri.Fedorov@upc.edu \\[1em]
  {}$^2$Kepler Institute of Astronomy,\\ University of Zielona G\'ora, 
  Licealna 9,  \\
  PL-65-407,  Zielona G\'ora, Poland, \\
  e-mail: maciejka@astro.ia.uz.zgora.pl \\[1em]
  {}$^{3}$Institute of Physics, \\ University of Zielona G\'ora, 
  Licealna 9,  \\
  PL-65-407,  Zielona G\'ora, Poland \\
  e-mail: M.Przybylska@proton.if.uz.zgora.pl}
\begin{document}

\maketitle

\begin{abstract} The classical Euler--Poinsot case of the rigid body
  dynamics admits a class of simple but non-trivial integrable
  generalizations, which modify the Poisson equations describing the
  motion of the body in space.  These generalizations possess first 
  integrals which are polynomial in the angular momenta.

  We consider the modified Poisson equations as a system of linear equations with elliptic coefficients 
and show that all the solutions of it are
  single-valued. By using the vector generalization of the Picard
  theorem, we derive the solutions explicitly in terms of sigma
  functions of the corresponding elliptic curve. The solutions are accompanied with a
  numerical example.  We also compare the generalized Poisson equations
  with % some generalizations of 
  the classical 3rd order Halphen equation.
\end{abstract}

\section{Introduction}

As in \cite{BZ}, we consider the following system
\begin{equation} \label{EP1} \dot J\omega = J\omega \times \omega,
  \quad \dot\gamma = \gamma \times { B} \omega, \qquad \omega,\gamma
  \in {\mathbb R}^3,
\end{equation}
which is a certain limit of the Kirchhoff equations describing the
motion of a rigid body in an ideal fluid.  Here $\omega$ is the
angular velocity of the body, $\gamma$ is the linear momentum;
$3\times 3$ matrices $J$, and $B$ are tensors of adjoint masses.  The
first equation in \eqref{EP1} are just the Euler equations describing
the free motion of the body with the inertia tensor $J$.

In the sequel, $J$ and $ B$ are assumed to be arbitrary diagonal
matrices.  In the special case $B=\id_{3}$, the system becomes the
classical integrable Euler-Poinsot case of the rigid body motion. In
this case vector $\gamma$ is a vertical vector fixed in space.  Then, in the general case,
$\omega$ is solved in terms of elliptic functions, and the 3
independent solutions for the vector $\gamma$ are elliptic functions
and elliptic functions of the second kind, see e.g., \cite{Jac,Whitt}.

Setting $M=J\omega$, the system \eqref{EP1} can be rewritten in the
form
\begin{equation}
  \label{EP3}
  M = M \times a M, \qquad \dot\gamma =
  \gamma \times b M ,
\end{equation}
where
\begin{equation}
  \label{eq:1}
  a=\diag(a_1, a_2, a_3):= J^{-1}, \qquad  b=\diag(b_1, b_2, b_3):=BJ^{-1}.
\end{equation}
It has three independent polynomial first integrals
\begin{equation}
  \label{EI} H_1:=\langle M ,a M \rangle  ,\qquad
  H_2:=\langle M , M \rangle, \qquad  H_3:=\langle \gamma, \gamma
  \rangle,
\end{equation}
Here and below $\pairing{x}{y}$
denotes the scalar product of vectors $x$, $y\in\R^3$.  As the 
system~\eqref{EP3} is divergence free, according to the Euler--Jacobi theorem, for its integrability one
additional first integral is required.

In \cite{BZ}, the authors applied the Kovalevskaya--Painlev\'e method
to search for integrable cases of the considered system.  It
was shown that if all the solutions of the system \eqref{EP3} are
meromorphic, or single-valued, then
\begin{equation} \label{cond_k} k^2 a_{32} a_{13}a_{21} + b_1^2
  a_{32}+ b_2^2 a_{13} + b_3^2 a_{21} =0, \quad a_{ij}= a_i- a_j,
\end{equation}
where $k$ is an \emph{odd} integer. Geometrically, the above condition
describes a quadric in ${\mathbb R}^3$ with coordinates $(b_1, b_2,b_3)$. 
In particular, one has $b=k a$. In this case the
equations \eqref{EP3} become what can be called the {modified Euler-Poinsot system}
\begin{equation}
  \label{EP2}
  \dot M = M \times a M, \quad \dot\gamma = k \,
  \gamma \times a M.
\end{equation}
As it was shown in \cite{BZ}, if the condition \eqref{cond_k} is
satisfied, then for {\it odd positive} $k$ the system \eqref{EP3} possess an
additional first integral $H_4$, which is algebraically independent
with ~\eqref{EI}. It is linear in $\gamma$, and of
degree $k$ in $M$, and can be written in the following form
\begin{equation}
  H_4 = \langle P(M), \gamma \rangle , \label{I4}
\end{equation}
where the vector $P(M)$ is given by
\begin{gather}
  P(M) = \diag(M_1, M_2, M_3) \Phi_k (M)\, T,    \label{P_mat} \\
  \Phi_{k}(M) :=\left( A_{1}^{-1} K\right)\cdot \left( A_{3}^{-1} K\right)\cdots
\left( A_{k-2}^{-1} K\right). % \mtext{with}  s= \frac{1}{2}(k-1).
\end{gather}
The matrices $K, A_n$ are defined as follows
\begin{gather*}
  K=\diag \left(M_1^2, M_2^2, M_3^2\right), \qquad
  A_n =\begin{pmatrix} -n\, a_{32}   & b_3 & -b_2 \\
    -b_3 &  -n\, a_{13}  & b_1 \\
    b_2 & -b_1 & -n \, a_{21}
  \end{pmatrix}, \quad n\in\N.
\end{gather*}
The constant vector $T\in\R^3$ in formula~\eqref{P_mat} spans the
kernel of the matrix $A_k$.

Notice that
\begin{equation*}
  \det A_n= -n(n^2 a_{32} a_{13}a_{21} + b_1^2 a_{32}+ b_2^2 a_{13} +
  b_3^2 a_{21}).
\end{equation*}

As will be shown in Section 8, the case of negative odd $k$ can be reduced to
the above one. 

In the simplest non-trivial case $b=ka$ with $k=3$, we have
\begin{equation}
  \label{eq:2}
  P(M)=  P^{(3)} :=\left(P^{(3)}_1 ,  P^{(3)}_2, P^{(3)}_3\right)^T
\end{equation}
with
\begin{align*}
  P^{(3)}_1 & =M_1\,[ (a_{1}a_{2}+8a_{1}^{2}-a_{3}a_{2}+a_{3}a_{1}) M_1^2 \\
 & \qquad + (3a_{3}a_{2}-3a_{3}a_{1}+9a_{1}a_{2})M_2^2 + (9a_{3}a_{1}-3a_{1}a_{2}+3a_{3}a_{2})M_3^2], \\
  P^{(3)}_2 & = M_2\, [(-3a_{3}a_{2}+3a_{3}a_{1}+9a_{1}a_{2})M_1^2 \\
   &\qquad +(a_{3}a_{2}-a_{3}a_{1}+8a_{2}^{2}+a_{1}a_{2})M_2^2 +
    (3a_{3}a_{1}-3a_{1}a_{2}+9a_{3}a_{2})M_3^2 ], \\
  P^{(3)}_3 & =M_3\,[(9a_{3}a_{1}+3a_{1}a_{2}-3a_{3}a_{2})M_1^2 \\
& \qquad +(-3a_{3}a_{1}+3a_{1}a_{2}+9a_{3}a_{2})M_2^2 +
    (a_{3}a_{1}+8a_{3}^{2}-a_{1}a_{2}+a_{3}a_{2})M_3^2 ].
\end{align*}

In \cite{BZ} it was also shown that the vector $P(t):=P(M(t))$,
where $M(t)$ is a solution of the Euler equation in \eqref{EP2},
itself is a meromorphic solution of the Poisson equation in
\eqref{EP2}.  Since the solution $M(t)$ in terms of elliptic or hyperbolic functions
is well-known, the solution $P(t)$ can be found by using \eqref{P_mat}.

In the sequel we will regard the generalized Poisson equations in
\eqref{EP2} as a separate system of linear equations
\begin{equation} \label{Poisson}
  \dot\gamma = {\cal A} (t)\gamma, \qquad {\cal A} = k \begin{pmatrix} 0 & a_3 M_3 & -a_2 M_2 \\
    -a_3 M_3 & 0 & a_1 M_1 \\
    a_2 M_2 & -a_1 M_1 & 0 \end{pmatrix}\in \mathrm{so}(3).
\end{equation}
with the coefficients given by the elliptic functions $M(t)$. 
(We will not consider the special cases when $M(t)$ are hyperbolic functions
describing asymptotic motions of the Euler top.)
 % Under
% the condition \eqref{cond_k}, there exists one elliptic vector solution $\gamma(t)= P(M(t))$.

In the present paper we show that under
the condition $b=ka$ ($k$ is odd), all the solutions of
\eqref{Poisson} are meromorphic. Our main goal is to give an explicit form of
their three independent complex solutions in terms
of elliptic functions and elliptic functions of the second kind
(sigma-functions and exponents), as presented in Theorems \ref{algebraic}
and \ref{comp_sol_theorem} below.  

Additionally, in Theorem \ref{ort_matrix}, we give expressions for the components of the
associated real orthogonal rotation matrix ${\cal R}(t)$ whose columns
satisfy the Poisson equations. 

These equations give rise to a 3rd order ODE for one of the components of the vector $\gamma$. 
In the final part, we compare this ODE with the best known integrable ODE with elliptic coefficients, namely
the Halphen equation, and show that, in general, they cannot be transformed into eauch other.

\section{General properties of the solutions}

As was already mentioned, the Kovalevskaya--Painlev\'e analysis made in \cite{BZ} shows that
for all the solutions of the system \eqref{EP3} or the Poisson equation~\eqref{Poisson}
to be single-valued, the condition~\eqref{cond_k} must hold and $k$ must be an odd integer.
We show that these conditions are also sufficient\footnote{In fact, this
was already stated in \cite{BZ}, however, without a proof.}.
\begin{lemma}
  \label{lem:sv}
  For an arbitrary solution $M_1(t), M_2(t), M_3(t)$ of the
  Euler equations, all solutions of generalized Poisson
  equations~\eqref{Poisson} are single-valued if and only if $k$ is an
  odd integer and condition~\eqref{cond_k} is fulfilled.
\end{lemma}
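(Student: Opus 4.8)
The plan is to localize: since $\C$ is simply connected, an arbitrary solution of \eqref{Poisson} is single-valued precisely when the local monodromy around each pole of the coefficient matrix $\cA(t)$ is trivial. The poles of $\cA$ are exactly the (simple) poles of the elliptic functions $M_i(t)$, each of which is a regular singular point, and by the translation structure of the Euler top all of them are equivalent, so it suffices to analyse one pole $t_0$. The ``only if'' part is the Kovalevskaya--Painlev\'e necessary condition already established in \cite{BZ} (and if $k\notin\Z$ the exponents computed below are non-integral, so the monodromy has an eigenvalue $\neq1$ and the solutions branch); the substance of the lemma is therefore the sufficiency, namely that for an odd integer $k$ the local monodromy at $t_0$ is the identity. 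I also note that \eqref{Poisson} already incorporates $b=ka$, so \eqref{cond_k} holds identically and imposes no extra constraint.

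First I would pin down the Fuchs exponents at $t_0$. Writing $s=t-t_0$ and $M_i=c_i/s+O(1)$ and matching the most singular terms of the Euler equations gives $c_1^2=1/(a_{13}a_{21})$, $c_2^2=1/(a_{32}a_{21})$, $c_3^2=1/(a_{32}a_{13})$, and hence the identity
\begin{equation*}
\sum_{i=1}^{3}a_i^2c_i^2=\frac{a_1^2a_{32}+a_2^2a_{13}+a_3^2a_{21}}{a_{32}a_{13}a_{21}}=-1,
\end{equation*}
the numerator being identically the negative of the denominator. The residue $R:=\res_{t=t_0}\cA$ is the skew-symmetric matrix obtained from \eqref{Poisson} by the substitution $M_i\mapsto c_i$, so its characteristic roots are $0$ and $\pm\rmi\,k\sqrt{\sum_i a_i^2c_i^2}=\pm k$. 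Thus the exponents at every pole are $-k,\,0,\,k$; they are integers exactly when $k\in\Z$, which is the first necessary condition.

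The key structural input is a parity symmetry. Each $M_i$ is an \emph{odd} function of $s$ about its pole: the reflected curve $t\mapsto -M(2t_0-t)$ again solves the Euler equations and shares with $M$ the pole position $t_0$, the residues $c_i$, and the values of $H_1,H_2$, so it coincides with $M$ (equivalently, this is the oddness of $\sn,\cn,\dn$ about $\rmi K'$). Consequently $\cA(-s)=-\cA(s)$, and the reflection $S\colon\gamma(s)\mapsto\gamma(-s)$ carries the three-dimensional local solution space $V$ into itself; continuing $s\mapsto\rme^{\rmi\pi}s\mapsto\rme^{2\pi\rmi}s$ identifies $S^2$ with the monodromy $T$. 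Because $\cA(t)\in\mathrm{so}(3)$, the symmetric bilinear form $B(\gamma,\delta):=\pairing{\gamma(s)}{\delta(s)}$ is independent of $s$, nondegenerate, and preserved by $S$ and $T$; hence $S,T\in\mathrm{O}(3,\C)$ with $S^2=T$. Since the exponents are integers, $T$ is unipotent, so single-valuedness ($T=\id$) is equivalent to $S$ being semisimple.

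It remains to prove, for odd $k$, that $S$ is semisimple, and here is where I expect the only real difficulty---the possible logarithm at the resonance joining the exponents $-k$ and $k$ (relative order $2k$)---to be dissolved conceptually rather than by the Frobenius recursion. In the filtration of $V$ by exponent, $S$ is triangular with diagonal entries $(-1)^{\rho}$, so for odd $k$ its eigenvalues are $-1,+1,-1$. The $+1$-eigenline is spanned by the exponent-$0$ solution, whose leading vector $g_0$ is parallel to $(a_1c_1,a_2c_2,a_3c_3)$ and satisfies $B(g_0,g_0)=\sum_i a_i^2c_i^2=-1\neq0$; thus this line is $B$-nondegenerate and its $B$-orthogonal complement is a nondegenerate $S$-invariant plane on which $S$ acts with both eigenvalues equal to $-1$. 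An element of $\mathrm{O}(2,\C)$ with eigenvalues $-1,-1$ has determinant $1$, hence lies in $\mathrm{SO}(2,\C)\cong\C^{*}$, and the only such element is $-\id$; in particular it is semisimple. Therefore $S$ is semisimple, $T=S^2=\id$, and all solutions of \eqref{Poisson} are single-valued, completing the sufficiency and with it the lemma.
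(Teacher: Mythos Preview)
Your argument is correct and takes a genuinely different route from the paper's. Both proofs defer the ``only if'' direction to \cite{BZ} and concentrate on sufficiency; both ultimately reduce the local monodromy to an $\mathrm{SO}(2,\C)$-block, which is automatically semisimple. The mechanisms differ: the paper invokes the extra integral $H_4=\pairing{P(M)}{\gamma}$ and the single-valued elliptic solution $P(t)$, so the monodromy must fix the line $\C P(t_0)$ inside $\mathrm{SO}(3,\C)$, forcing the block form \eqref{mond}. Your proof avoids $H_4$ altogether and instead exploits the parity of the Euler solution about its poles (equivalently, the evenness of $\sigma_\alpha/\sigma$) to manufacture a half-monodromy $S$ with $S^2=T$; the quadratic Casimir $\pairing{\gamma}{\gamma}$ alone then provides the orthogonal structure. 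This is more elementary in that it needs neither the construction of $P$ nor the degree-$k$ integral, only the intrinsic $\mathrm{so}(3)$ symmetry and the oddness of $M$; the paper's route, on the other hand, ties the result directly to the integrable structure that drives the rest of the analysis.

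One step you pass over quickly deserves a sentence: when you say ``the $+1$-eigenline is spanned by the exponent-$0$ solution, whose leading vector $g_0$ is parallel to $(a_1c_1,a_2c_2,a_3c_3)$'', you are implicitly asserting that this eigenvector is logarithm-free at $s=0$, so that $B(\gamma_0,\gamma_0)=\pairing{g_0}{g_0}$ makes sense. This does follow from your parity symmetry---an even solution in $s$ cannot carry a $\log s$ term---but it is worth stating, since the possible logarithm at the $0\!\to\!k$ resonance is precisely one of the obstructions you are ruling out. Alternatively, you can bypass the explicit value of $B(\gamma_0,\gamma_0)$ entirely: the $+1$-eigenvector is automatically $B$-orthogonal to the whole generalized $(-1)$-eigenspace, so nondegeneracy of $B$ forces $B(\gamma_0,\gamma_0)\neq0$.
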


\begin{proof}
  An elliptic or a hyperbolic solution $M(t)$ of the Euler equation has four simple
  poles in the fundamental region. Hence, all singular points of
  equation~\eqref{Poisson} on $\C$ are regular. Since the equation is linear, branching of its
  solutions can happen only at the singular points.

If $k$ is an odd integer and condition~\eqref{cond_k} is
  satisfied, then all exponents at each singular point are integers.
  However a branching can still occur if the local series solution in a neighborhood of a singular point
has logarithmic terms. We will show that this never happens due to the presence of the first integral \eqref{I4}.
Namely, the integral implies that the equation~\eqref{Poisson} has time dependent first integral
  $I_4(t,\gamma):=\pairing{P(t)}{\gamma}$, which is polynomial of degree
  $k$ in $\gamma$, and $P(t)$ is the corresponding elliptic solution of~\eqref{Poisson}. Assume that $P(t)$ is
normalized: $\langle P(t),P(t)\rangle =1$.

  Now take $t_0\in\C$ which does
  not coincide with a pole of $M(t)$, and a loop  $s\mapsto
  \tau(s)\in\C$, $s\in [0,1]$, $\tau(0)=\tau(1)=t_0$, which encircles once
  counterclockwise a pole $t^*$ of $M(t)$.
  Let $\Gamma(t)$ be a fundamental matrix of~\eqref {Poisson} with the
  first column proportional to $P(t)$ and let $\Gamma(t_0)\in \mathrm{SO}(3,\C)$.
Then $\Gamma(t)\in \mathrm{SO}(3,\C)$ for all $t$ where it is defined.

A continuation along the loop $\tau$ gives a monodromy matrix $ {\cal M}_{\tau} \in \mathrm{SO}(3,\C)$:
$$
 \Gamma(\tau(s+1))= \Gamma(\tau(s))\, {\cal M}_{\tau}.
$$
For any solution $\gamma(t)= \Gamma(t) \vec v$, $\vec v =\text{const}\in {\mathbb C}^3$,
the integral $I_4(t,\gamma)$ implies
$$
\langle P(t_0), \gamma(t_0)\rangle = \langle P(t_0),\Gamma(t_0)\vec v\rangle =
\langle P(t_0),\Gamma(t_0)  {\cal M}_{\tau} \vec v\rangle .
$$
Since $\vec v$ is arbitrary, this yields $ P^T(t_0)\Gamma(t_0) =P^T(t_0)\Gamma(t_0) {\cal M}_{\tau}$ and, due to the
ortogonality of $\Gamma(t)$ and the normalization of $P(t)$,
$$
(1,0,0)=(1,0,0){\cal M}_{\tau} \, .
$$
Then, since, ${\cal M}_{\tau}$ is also orthogonal, it must have the block structure
\begin{equation} \label{mond}
{\cal M}_{\tau} = \begin{pmatrix} 1 & 0 & 0 \\ 0 & \theta & \vartheta \\ 0 & -\vartheta & \theta \end{pmatrix}, \qquad 
\theta, \vartheta\in\C, \quad \theta^2+\vartheta^2=1.
\end{equation}
We now recall that the formal series solution in a neighborhood of the singular point $t^*$
has logarithmic terms if and only if the monodromy matrix ${\cal M}_{\tau}$ is not diagonalizable
(see, e.g., \cite{IY}). However \eqref{mond} is diagonalizable for any $\theta, \vartheta$ satisfying the above condition.
\end{proof}

One of the main tools of our subsequent analysis will be a vector extension of the
known Picard theorem formulated, in particular, in \cite{Fed, Gez}. For our purposes we adopt it in the following form.

\begin{theorem}
  \label{gen_vector_monodromy} Let $T_1$ and $T_2$ be the common, real
  and imaginary periods of the elliptic solutions $M_1(t), M_2(t),
  M_3(t)$ of the Euler equations.  If all the solutions of
  \eqref{Poisson} are meromorphic, then, apart from the elliptic
  vector solution $\gamma(t) =P(M(t))$ of \eqref{Poisson}, there exist
  two elliptic solutions of the second kind $\gamma(t)= G^{(1)}(t)$,
  and $\gamma(t)= G^{(2)}(t)$, which satisfy
  \begin{equation} \label{mon_norm} G^{(1)}(t+T_j)=S_j G^{(1)}(t),
    \quad G^{(2)}(t+T_j)=S_j^{-1} G^{(2)}(t), \qquad j=1,2,
  \end{equation}
  where $S_1, S_2\in {\mathbb C}$, and, moreover, $|S_1|=1$.
\end{theorem}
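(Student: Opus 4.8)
The plan is to reduce the statement to the monodromy theory of a linear ODE whose only obstruction to global single-valuedness is the period lattice. Since all solutions of~\eqref{Poisson} are assumed meromorphic, Lemma~\ref{lem:sv} guarantees that the local monodromy around every pole of $M(t)$ is trivial, so the sole source of nontrivial monodromy is translation by the lattice $\Z T_1+\Z T_2$. First I would fix a base point $t_0$ off the poles and take, exactly as in the proof of Lemma~\ref{lem:sv}, the fundamental matrix $\Gamma(t)$ of~\eqref{Poisson} with $\Gamma(t_0)\in\mathrm{SO}(3,\C)$ and first column proportional to the normalized elliptic solution $P(t)=P(M(t))$; since $\mathcal{A}(t)\in\mathrm{so}(3)$, the identity $\Gamma^{T}\Gamma\equiv\id$ is preserved, so $\Gamma(t)\in\mathrm{SO}(3,\C)$ wherever it is finite. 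Double periodicity of $\mathcal{A}(t)$ then yields constant monodromy matrices $C_1,C_2\in\mathrm{SO}(3,\C)$ with $\Gamma(t+T_j)=\Gamma(t)\,C_j$.

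Next I would extract the eigenstructure of $C_1,C_2$. Because $P(t)$ is elliptic, $P(t+T_j)=P(t)$, which forces $C_j e_1=e_1$; combined with $C_j\in\mathrm{SO}(3,\C)$ this gives precisely the block form~\eqref{mond}, i.e.\ $C_j=1\oplus R_j$ with $R_j\in\mathrm{SO}(2,\C)$. The crucial observation is that every element of $\mathrm{SO}(2,\C)$, being of the form $\left(\begin{smallmatrix}\theta&\vartheta\\-\vartheta&\theta\end{smallmatrix}\right)$ with $\theta^2+\vartheta^2=1$, has the \emph{same} pair of isotropic eigenvectors $(1,\pm\rmi)$, with reciprocal eigenvalues $\theta\pm\rmi\vartheta$. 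Hence $w^{(1)}=(0,1,\rmi)^{T}$ and $w^{(2)}=(0,1,-\rmi)^{T}$ are simultaneous eigenvectors of both $C_1$ and $C_2$, with $C_j w^{(1)}=S_j w^{(1)}$ and $C_j w^{(2)}=S_j^{-1}w^{(2)}$ for suitable $S_j\in\C$. Setting $G^{(1)}(t)=\Gamma(t)w^{(1)}$ and $G^{(2)}(t)=\Gamma(t)w^{(2)}$ and using $\Gamma(t+T_j)=\Gamma(t)C_j$ then produces the transformation laws~\eqref{mon_norm} at once; since $S_j\neq 1$ in general, these are genuine solutions of the second kind, and they are meromorphic because $\Gamma$ is.

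It remains to prove $|S_1|=1$, and this is the step I expect to require the most care. The idea is that the real period $T_1$ must give a \emph{real} monodromy. On the real axis $M_1,M_2,M_3$ are real, so $\mathcal{A}(t)\in\mathrm{so}(3,\R)$ there, and the solution with real initial data at $t_0$ stays real; by single-valuedness it continues unambiguously through the apparent poles, so $\Gamma(t)$ may be taken real on $\R$ with $\Gamma^{T}\Gamma\equiv\id$. Evaluating at the real point $t_0+T_1$ gives $C_1=\Gamma(t_0)^{T}\Gamma(t_0+T_1)\in\mathrm{SO}(3,\R)$, a genuine rotation, whose eigenvalues are $1,\rme^{\pm\rmi\phi}$; thus $S_1=\rme^{\rmi\phi}$ and $|S_1|=1$. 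No such reality is available along the imaginary period $T_2$, which is exactly why only $|S_1|=1$ is asserted.

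The main obstacle is precisely this last step: one must justify that the abstractly defined complex monodromy $C_1$ coincides with the transfer matrix of a real orthogonal flow, despite the poles of $M(t)$ lying on the real axis inside a period. The safe route is to lean on the single-valuedness established in Lemma~\ref{lem:sv}: the relations $\Gamma^{T}\Gamma\equiv\id$ and the reality of $\Gamma$ under real data are identities between meromorphic functions, valid wherever $\Gamma$ is finite, and in particular at the pole-free endpoints $t_0$ and $t_0+T_1$, which bypasses any delicate analysis at the poles themselves.
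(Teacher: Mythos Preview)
Your argument is correct and is in fact tighter than the paper's. The paper first invokes the vector Picard theorem to obtain one second-kind solution, forms an upper-triangular monodromy, argues that the two monodromy matrices are individually diagonalizable and then (using meromorphy) simultaneously diagonalizable, and finally appeals to Floquet's relation $\det\mathcal{M}_j=\exp\!\int_0^{T_j}\tr\mathcal{A}\,dt=1$ to get the reciprocity $S_j,\,S_j^{-1}$. You bypass all of this by exploiting the concrete structure of $\mathrm{SO}(2,\C)$: every matrix of the form~\eqref{mond} has the \emph{same} isotropic eigenvectors $(1,\pm\rmi)$, which simultaneously diagonalizes $C_1$ and $C_2$ and delivers the reciprocal eigenvalues $\theta\pm\rmi\vartheta$ in one stroke. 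For $|S_1|=1$ the paper takes a real combination $\nu_1G^{(1)}+\nu_2G^{(2)}$ and compares it with its $T_1$-translate; your route via $C_1\in\mathrm{SO}(3,\R)$ is equivalent and a bit cleaner.

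One small remark: your worry about poles of $M(t)$ on the real $t$-axis is unnecessary here. With $u=\rmi m t+\Omega_2$ and the pole of $M_\alpha$ at $u\equiv 0$, the pole in $t$ sits at $t=\rmi\Omega_2/m$, which is purely imaginary since $\Omega_2$ is real; so for real $t$ the coefficients are real-analytic and the real fundamental matrix is defined without any appeal to analytic continuation through singularities. Your ``safe route'' via single-valuedness would still work, but it is not needed.
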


\begin{proof}The existence of at least one vector solution of the
  second kind, $G(t)$, follows from the vector extension of the Picard
  theorem mentioned above. Let $s_1, s_2$ be its monodromy factors
  with respect to the periods $T_1, T_2$.

  Let $\widehat G(t)$ be another solution of \eqref{Poisson}, and $
  \Gamma(t) = \left(G(t), \widehat G(t), P(t)\right)$ be a fundamental matrix.
  The monodromy matrices $\cM_1$, and $\cM_2$, corresponding to
  periods $T_1, T_2$, respectively, are given by
  \begin{gather*}
    \Gamma (t+ T_j) = \left(s_j G , \, \chi_j G+ \hat \chi_j \hat G+\rho_j P , \, P
    \right)= \Gamma(t) {\cal M}_j, \intertext{where} {\cal M}_j
    = \begin{pmatrix} s_j & \chi_j & 0 \\ 0 & \hat\chi_j & 0 \\ 0 & \rho_j &
      1 \end{pmatrix}, \quad j=1,2,
  \end{gather*}
  where $\chi_j, \hat \chi_j, \rho_j$ are certain constants. Observe that, regardless to the values of the constants,
both monodromy matrices ${\cal M}_1$, and ${\cal M}_2$ are diagonalizable.

  Next, since, by the assumption, all the solutions of \eqref{Poisson}
  are meromorphic, the monodromy group must be trivial.  Therefore,
  ${\cal M}_1$, and ${\cal M}_2$ commute, and are diagonalizable in
  the same basis.  As a result, there exist two independent solutions
  of the second kind $G^{(1)}(t), G^{(2)}(t)$ forming the
  fundamental matrix $\left(G^{(1)}(t), G^{(2)}(t), P(t)\right)$.
Following the general Floquiet theory,
the corresponding monodromy matrices $\bar{\cal M}_1, \bar{\cal M}_2$ must satisfy
$$
\det \bar{\cal M}_j = \exp \left( \int_0^{T_j} \text{Tr} \, {\cal A} (t)
  dt \right)=1, \qquad j=1,2.
$$
(Here we used the property $ {\cal A} (t)\in so(3,{\mathbb C})$.)
Hence, since the monodromy of the elliptic solution $P(t)$ is trivial,
the monodromy factors of $G^{(1)}(t)$, and $G^{(2)}(t)$ are
reciprocal, and this implies \eqref{mon_norm}.

Further, let for certain constants $\nu_1$, $\nu_2\in\C$
$$
\gamma (t) = \nu_1 G^{(1)}(t) + \nu_2 G^{(2)}(t), \quad t\in {\mathbb
  R}
$$
be a real vector solution of the Poisson equation.  This means that, for $i=1,2,3$, the numbers
$\nu_1 G^{(1)}_i(t)$ and $\nu_2 G^{(2)}_i(t)$ are complex conjugated.
Then, for the real period $T_1$, the vector
$\gamma(t+T_1)$ is also a real solution.  On the other hand, from the
above and from the monodromy \eqref{mon_norm}, we deduce that
$$
\gamma_i(t+T_1) = S_1 \, \nu_1 G^{(1)}_i(t+ T_1) + S_1^{-1} \, \nu_2
G^{(2)}_i (t+T_1), % S_1\in {\mathbb C},
$$
which is real if and only if $|S_1|=1$.
\end{proof}
\section{Algebraic parametrization and elliptic sigma-function
  solution for $M$ and $P(M)$.}

We first recall how generic solutions of the Euler equation in \eqref{EP3} can
be expressed in terms of the Weierstrass sigma functions.  We need
this fact to derive the general solution of the Poisson equation.

Let us fix a common level of first integrals \eqref{EI}
\begin{equation}
  \label{EIlev}
  \langle M ,a M \rangle =l,\qquad
  \langle M , M \rangle=m^2, \qquad  \langle \gamma, \gamma
  \rangle =1
\end{equation}
For a generic values $\chi$, $m$, solutions $M_i(t)$ of the Euler
equations are elliptic functions related to the elliptic curve $E$,
given by
\begin{equation}
  \label{E0} E=\left\{\mu^2 = U_4 (\lambda) \right\}, \qquad
  U_4(\lambda):= -(\lambda -a_{1})(\lambda -a_{2})(\lambda-a_{3})(\lambda -c) ,
\end{equation}
where $c:= l/m^2$. Here and below we assume that $c\ne a_1, a_2, a_3$. This curve, compactified and regularized, 
has two infinite points $\infty_{\pm}$.

A "rational" parametrization of the momenta $M_{i}$ in terms of the
coordinates $\lambda$, see, e.g., \cite{Acta_bill}, have the following
form
\begin{gather}
  M_\alpha = m\sqrt{(a_\beta-c)(a_\gamma-c) \over
    (a_\alpha-a_\beta)(a_\alpha-a_\gamma)} \sqrt{ \frac{\lambda
      -a_\alpha}{\lambda -c} }.  \label{par0}
\end{gather}
Then, from the Euler equations, we easily deduce that the evolution of
$\lambda$ is given by the equation
\begin{equation}
  \label{dot_la}
  \dot \lambda= 2 m \sqrt{ -(\lambda
    -a_{1})(\lambda -a_{2})(\lambda -a_{3})(\lambda -c) }.
\end{equation}
That is, for any $\lambda\in {\mathbb C}$, the right hand sides of
\eqref{par0} satisfy the equations \eqref{EIlev}.

For a real motion, i.e., for real values of $l$, $m$, and $t$, if
$a_1< a_2 <a_3$, then one has $c\in (a_1,a_3)$, $c\ne a_2$. Moreover,
\begin{equation*}
  \lambda \in
  \begin{cases}
    [a_2,a_3],  & \mtext{if}  a_1<c<a_2, \\
    [a_1,a_2], & \mtext{if} a_2<c<a_3 .
  \end{cases}
\end{equation*}
%In our further considerations we assume that the above conditions are fulfilled.

The birational map $(\lambda,\mu)\to (z,w)$, given by
\begin{gather}
  z= \frac 13 \frac{(\tau_2 -2c \tau_1 +3c^2)\la+ 2c \tau_2
    -c^2\tau_1-3 \tau_3}{\la-c}, \quad w= \frac{\mu}{ (\la-c)^2
  }, \label{BT}
\end{gather}
transforms the elliptic curve $E$ to its canonical Weierstrass form
\begin{gather} \label{canon_W}
  {\cal E} =\left\{w^{2}=U_3(z) \right\}, \qquad U_3(z):= 4 ( z-e_{1})
  (z-e_{2} )(z-e_{3})=4z-g_2z-g_3
\end{gather}
where
\begin{equation}
  \begin{split}
    3 e_\alpha = \tau_2+c\tau_1-3(a_\beta a_\gamma+ c a_\alpha) , \qquad  e_1+e_2+e_3=0, \\
    \tau_1 = a_1+a_2+a_3, \qquad \tau_2= a_1a_2+a_2 a_3+a_3a_1, \qquad
    \tau_3= a_1 a_2 a_3.
  \end{split}
\end{equation}
The above map sends $\lambda=c$ to $z=\infty$, and $a_i$ to $e_i$,
respectively. Then there is the following relation between the
holomorphic differentials on $E$ and $\cal E$:
$$
\rmi \frac{ \rmd \la }{2\sqrt{ U_4( \la)} } =\frac{ \rmd z }{\sqrt{4
    (z-e_1)(z-e_2)(z-e_3) } }.
$$
Let us introduce the Abel map
\begin{equation} \label{A1} u = \rmi \int_c^{p} \frac{ \rmd \la }{2
    \sqrt{ U_4( \la)} },\, \mtext{where} p=(\la,\mu) \in E.
\end{equation}
The integrals
$$
\Omega_\alpha:=\rmi \int_c^{a_\alpha} \frac{ \rmd \la }{2 \sqrt{ U_4(
    \la)} }, \qquad \alpha=1,2,3
$$
are the half-periods of the curve $E$. We choose the sign of the root
$U_4( \la)$ to ensure $\Omega_1+ \Omega_2+\Omega_3=0$.

According to \eqref{A1}, in the case $a_1 < c < a_2 < a_3$ the
half-period $\Omega_1$ is imaginary and $\Omega_2$ is real, whereas
for $a_1 < a_2 < c < a_3$ the half-period $\Omega_3$ is imaginary and
$\Omega_2$ is real. In both cases, comparing \eqref{A1} with
\eqref{dot_la}, we get
\begin{equation} \label{time} u= \rmi m (t-t_0)+ \Omega_2.
\end{equation}

Using the Weierstrass sigma function $\sigma(u)=\sigma(u |\,
2\Omega_1, 2\Omega_3)$, one can write
\begin{gather}
  \la-c = \text{const } \cdot \frac{\sigma^2 (u)}{\sigma(u-h)\,
    \sigma(u+h)}, \quad \text{where} \quad h=\int_c^{\infty}\frac{d\la }{\sqrt{ U_4( \la)} }, \\
  \frac{\la-\rho}{\la-c} = \text{const }\cdot \frac{
    \sigma(u-\beta)\sigma(u+\beta)}{\sigma^2(u)}, \quad \beta
  =\int_c^{\rho} \frac{ \rmd \la }{\sqrt{ U_4( \la)} } . \label{wf_2}
\end{gather}
% where $C$ and $\widetilde C$ are constants.  
Moreover, we also have
\begin{equation}
  \label{s/s}
  \sqrt{\frac{\lambda -a_\alpha}{\lambda -c}} =
  C_\alpha \frac{\sigma_\alpha(u)}{\sigma(u)},  \quad
  \alpha=1,2,3,
\end{equation}
where $C_{\alpha}$ are certain constants and $\sigma_\alpha(u)$ are the
sigma-functions obtained from $\sigma(u)$ by shift of $u$, and by
multiplication by an exponent:
\begin{equation} \label{sig} \sigma_\alpha(u):= \rme^{\eta_\alpha u}
  \frac{\sigma(\Omega_\alpha-u) }{\sigma(\Omega_\alpha)}, \qquad
  \eta_\alpha=\zeta(\Omega_\alpha), \quad
  \zeta(u)=\frac{\sigma'(u)}{\sigma(u)},
\end{equation}
where $\alpha=1,2,3$.  Note that we have
\begin{equation} \label{exp_sig} \sigma(u)=u -\frac{g_2}{240} u^5 -
  \frac{g_3}{840} u^7+ \cdots, \quad \text{and} \quad \sigma_1(0) =
  \sigma_2(0) =\sigma_3(0)=1.
\end{equation}
see, e.g., \cite{H_Cu} or \cite{Law}.  From \eqref{par0}, \eqref{s/s},
it follows that the solutions of the Euler equations have the form
\begin{equation}
  \label{sols_M}
  M_\alpha = h_\alpha
  \frac{\sigma_\alpha (u)}{\sigma(u)}, \quad u= \rmi  m\, t+ \Omega_2 , \quad \alpha=1,2,3.
\end{equation}
with certain constants $h_\alpha$ which we determine below. In view of
\eqref{sig}, the sigma-quotients have the quasiperiodic property
\begin{equation} \label{quasi}
\frac{\sigma_\alpha(u+ 2\Omega_j)}{
    \sigma(u+ 2\Omega_j)} = (-1)^{1-\delta_{\alpha,j}}
  \frac{\sigma_\alpha(u)}{\sigma(u)},
\end{equation}
where $\delta_{\alpha,j}$ is the Kronecker symbol. Hence, the
coefficients $M_\alpha(u)$ of the Poisson equation \eqref{Poisson}
have common periods $4\Omega_1, 4\Omega_2$.

Next, using the parametrization \eqref{par0}, and the expressions
\eqref{P_mat}, for each odd $k$ we get the following parametrization
for the elliptic solution $P(M)=(P_1, P_2, P_3)^T$:
\begin{equation}
  \label{ell_P}
  P_\alpha= % M_\alpha(\la)  \Psi_{\alpha,s}(M(\la)) =
  \sqrt{(a_{\beta}-c)(a_{\gamma}-c) \over (a_\alpha-a_\beta)(a_\alpha-a_\gamma)} \sqrt{
    \frac{\lambda -a_\alpha}{\lambda -c} } \frac{ F_{s,\alpha}(\la)}{(\la-c)^s},
\end{equation}
where $F_{s,\alpha}(\la)= \rho_{0,\alpha}\prod_{r=1}^{s} \left( \lambda -  \rho_{r,\alpha}\right)$
is a polynomial of degree $s=(k-1)/2$, which is obtained by substituting \eqref{par0} into the vector
$\Phi_k T$ in \eqref{P_mat} and taking the numerator. 
The sum $\Delta_k = P_1^2(\la)+P_2^2(\la)+ P_2^2(\la)$ is a constant depending on $a_\alpha$, and $c$ only.

In particular, for $k=3$, and $b_\alpha = 3 a_\alpha$, by using \eqref{eq:2}, we have
\begin{align}
  F_{11}(\lambda) & =[3\tau_2 +4c(c-\tau_1-2a_1) -2 a_2a_3]
  \lambda + c(\tau_{2}+2 a_{2}  a_{3}) -4\tau_3 + 8 c^{2} a_{1} , \notag \\
  F_{12}(\lambda) & =[3\tau_2 +4c(c-\tau_1-2a_2) -2 a_3a_1]
  \lambda + c(\tau_{2}+2 a_{3}  a_{1}) -4\tau_3 + 8 c^{2} a_{2} ,  \label{FF} \\
  F_{13}(\lambda) & = [3\tau_2 +4c(c-\tau_1-2a_3) -2 a_1a_2] \lambda +
  c(\tau_{2}+2 a_{1} a_{2}) -4\tau_3 + 8 c^{2} a_{3}\notag
\end{align}
and $\Delta_3 = \tau_2^{2} - 4\tau_{1}\tau_3+ 36c \tau_3- 48c^{2}
\tau_2 + 64c^{3}\tau_1$.

Now, applying expressions \eqref{wf_2}, \eqref{s/s} to \eqref{ell_P},
we get
\begin{equation}
  \label{sols_P}
  P_\alpha = c_\alpha
  \frac{\sigma_\alpha(u)}{\sigma(u)}
  \prod_{r=1}^s \frac{\sigma(u+
    v_{r,\alpha}) \sigma
    (u-v_{r,\alpha})}{\sigma^{2}(u)},
\end{equation}
where
\begin{equation}
  v_{r,\alpha} = \pm \rmi \int_c^{\rho_{r,\alpha}}  \frac{ \rmd \la
  }{2\sqrt{ U_4( \la)} }, \qquad u=\rmi  m t+ \Omega_2, \notag
\end{equation}
for $r=1, \dots, s$. Thus, the components $P_\alpha(u)$ have a pole of
order $k$ at $u=0$ and, like $M_\alpha(u)$, they are doubly periodic
with common periods $4\Omega_1, 4\Omega_2$. We finally have

\begin{proposition} \label{M_P} The momentum vector $M$ and the elliptic vector
  solution $P$ of the Poisson equations can be written as
  \begin{gather}
    \label{sols_Pn}
    M_\alpha = m \epsilon_\alpha \frac{\sigma_\alpha (u)}{\sigma(u)},
    \quad P_\alpha = \epsilon_\alpha
    \frac{\sigma_\alpha(u)}{\sigma(u)}
    \prod_{r=1}^s \frac{\sigma(u+v_{r,\alpha})\, \sigma(u- v_{r,\alpha} )}{\sigma^2( v_{r,\alpha})\, \sigma^{2}(u)}, \\
    \epsilon_{\alpha } =\frac{1}{\sqrt{(a_\alpha -a_\beta)(a_\alpha-a_\gamma)}}, \quad
    (\alpha,\beta,\gamma)=(1,2,3), \label{eps's}
  \end{gather}
  where signs of $\epsilon_{\alpha}$ are chosen according to the condition
$$
 \frac{1}{\epsilon_1 \epsilon_2 \epsilon_3} = - (a_1-a_2)(a_2-a_3)(a_3-a_1),
$$ 
and $u$ depends on time $t$ via \eqref{time}.

  Then also
  \begin{equation}\label{P^2}
    P_\alpha^2(u) = \epsilon_\alpha \left(\wp(u)-\wp(\Omega_\alpha) \right) \prod_{l=1}^s \left(\wp(u)-\wp(v_{l,\alpha }) \right)^2,
  \end{equation}
  where $\wp(u)=\wp(u|g_2, g_3)$ is the Weierstrass $P$-function.

  Here, for any $u\in {\mathbb C}$
  \begin{gather}
    M_1^2(u)+ M_2^2(u)+M_3^2(u)= m^2, \quad P_1^2(u)+
    P_2^2(u)+P_3^2(u)=\Pi, \label{PP} \\ \notag
\Pi = \epsilon_\beta^2
    (\wp(\Omega_\alpha)-\wp(\Omega_\beta)) \prod_{l=1}^s
    \left(\wp(\Omega_\alpha)-\wp(v_{l,\beta }) \right)^2 \\
    \qquad + \epsilon_\gamma^2 (\wp(\Omega_\alpha)-\wp(\Omega_\gamma))
    \prod_{l=1}^s \left(\wp(\Omega_\alpha)-\wp(v_{l,\gamma })
    \right)^2 , \label{Pi}
  \end{gather}
  for any permutation $(\alpha,\beta,\gamma)=(1,2,3)$.
\end{proposition}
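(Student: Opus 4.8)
The plan is to split the statement into three parts and treat them with the same two classical tools. The parts are: the closed sigma-function forms \eqref{sols_Pn} with the specific constants $m\epsilon_\alpha$ and $\epsilon_\alpha$ and the sign normalization; the squared expression \eqref{P^2}; and the two conservation identities \eqref{PP} together with the explicit value \eqref{Pi} of $\Pi$. Everything will rest on two Weierstrass identities attached to the curve ${\cal E}$: the half-period relation $(\sigma_\alpha(u)/\sigma(u))^2=\wp(u)-\wp(\Omega_\alpha)$, which is a consequence of the definition \eqref{sig} and the normalization \eqref{exp_sig}, and the difference formula $\sigma(u+v)\sigma(u-v)/(\sigma^2(u)\sigma^2(v))=\wp(v)-\wp(u)$. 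I would first fix every undetermined constant, then square, and finally evaluate at a half-period.

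First I would pin down the constants $h_\alpha$ in \eqref{sols_M} and $c_\alpha$ in \eqref{sols_P}, both by matching leading Laurent coefficients at $u=0$, which by the Abel map \eqref{A1} is the point $\lambda=c$. Near $\lambda=c$ one has $U_4(\lambda)\approx-(\lambda-c)\prod_i(c-a_i)$, so \eqref{A1} gives $\lambda-c\sim u^2\prod_i(c-a_i)$; feeding this into \eqref{par0} and using $\sigma_\alpha(0)=1$, $\sigma(u)\sim u$ yields $C_\alpha^2=1/((a_\beta-c)(a_\gamma-c))$ in \eqref{s/s}, and hence $h_\alpha^2=m^2\epsilon_\alpha^2$ with $\epsilon_\alpha^2=1/((a_\alpha-a_\beta)(a_\alpha-a_\gamma))$. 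The overall product of signs is then forced by comparing $M_1M_2M_3$ with the value dictated by \eqref{par0}, which fixes the stated normalization of $\epsilon_1\epsilon_2\epsilon_3$. For $P$ the same prefactor already appears in \eqref{ell_P}; writing the Möbius map \eqref{BT} as $\wp(u)=z(\lambda)$, one checks the elementary identity $\wp(v_{r,\alpha})-\wp(u)=\text{const}\cdot(\lambda-\rho_{r,\alpha})/(\lambda-c)$, so that the factor $F_{s,\alpha}(\lambda)/(\lambda-c)^s$ converts into $\prod_r(\wp(v_{r,\alpha})-\wp(u))$; matching the leading coefficient $\rho_{0,\alpha}$ identifies the remaining constant as $\epsilon_\alpha$, producing \eqref{sols_Pn}.

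The squared form \eqref{P^2} is then immediate: on squaring \eqref{sols_Pn}, the factor $(\sigma_\alpha/\sigma)^2$ becomes $\wp(u)-\wp(\Omega_\alpha)$ by the half-period identity, while each of the $s$ normalized factors becomes $(\wp(v_{r,\alpha})-\wp(u))^2$ by the difference formula. In particular $P_\alpha^2$ is a polynomial of degree $k=2s+1$ in $\wp(u)$, hence an even elliptic function with lattice periods $2\Omega_1,2\Omega_3$ whose only pole, of order $2k$, sits at $u\equiv0$.

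Finally the conservation identities. The relation $\sum_\alpha M_\alpha^2=m^2$ is just the integral $H_2$ from \eqref{EIlev}, which holds identically in $\lambda$ since \eqref{par0} was built to satisfy \eqref{EIlev} for every $\lambda$. For $P$, the function $\sum_\alpha P_\alpha^2(u)$ is the pull-back of $\Delta_k(\lambda)=\sum_\alpha P_\alpha^2(\lambda)$, already observed to depend only on $a_\alpha,c$, so it is a constant $\Pi$. To evaluate $\Pi$ I would set $u=\Omega_\alpha$ in \eqref{P^2}: the term $P_\alpha^2(\Omega_\alpha)$ vanishes through the factor $\wp(\Omega_\alpha)-\wp(\Omega_\alpha)$, leaving only the $\beta$- and $\gamma$-terms, which is exactly \eqref{Pi}. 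The main obstacle is the constancy of $\sum_\alpha P_\alpha^2$: proved from scratch it demands cancellation of the entire principal part at $u=0$, i.e.\ of all pole orders from $2k$ down to $1$. The top coefficient cancels by the classical identity $\sum_\alpha 1/((a_\alpha-a_\beta)(a_\alpha-a_\gamma))=0$, but the subleading cancellations are precisely the content of "$\Delta_k$ is constant"; once that is granted, the half-period evaluation and the sign bookkeeping for $\epsilon_\alpha$ are the only remaining delicate points.
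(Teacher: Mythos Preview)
Your proposal is correct and follows essentially the same overall strategy as the paper: fix the constants by matching leading Laurent coefficients at $u=0$, then square using the two standard Weierstrass identities, then evaluate the first integral $\langle P,P\rangle$ at a half-period. The one genuine difference is in how the leading coefficients $\epsilon_\alpha$ are obtained. You expand the \emph{algebraic parametrization} \eqref{par0} near $\lambda=c$ via the Abel map, first identifying $C_\alpha$ in \eqref{s/s} and then multiplying through; the paper instead substitutes a formal Laurent series $M=\frac{1}{\delta t}(M^{(0)}+\cdots)$ directly into the Euler \emph{ODE} and reads off $M^{(0)}_\alpha=\rmi\,\epsilon_\alpha$ (up to the four sign patterns corresponding to the four poles in the fundamental domain). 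The paper's route has the advantage that the product condition on $\epsilon_1\epsilon_2\epsilon_3$ falls out automatically from the balance $-M^{(0)}=M^{(0)}\times aM^{(0)}$, whereas in your approach the sign is fixed by an extra comparison of $M_1M_2M_3$ against \eqref{par0}, which itself carries square-root sign ambiguities that you should make explicit. Apart from this, the two arguments coincide, including the appeal to the fact that $\sum P_\alpha^2$ is constant (you cite $\Delta_k=\text{const}$, the paper cites the integral $H_3$; these are the same statement).
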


\paragraph{Remark.} According to the rule \eqref{quasi}, the shift $u \to u+ 2\Omega_\alpha$ in the solutions
 \eqref{sols_Pn} is equivalent to flip of signs of some of the constants $\epsilon_i$ in such a way that 
the above condition is satisfied.   

\begin{proof}[Proof of Proposition \ref{M_P}] To calculate the constants $h_\alpha, c_\alpha$ in the
  elliptic solutions \eqref{sols_M}, \eqref{sols_P}, we compare the
  leading terms of their Laurent expansions near the poles and the
  expansions of the sigma functions. Namely, let $t_0\in {\mathbb C}$
  be a pole of the functions $M(t), P(t)$, and $\delta t=
  t-t_0$. Substituting
$$
M=\frac {1}{\delta t} \left( M^{(0)}+M^{(1)}\delta t+\cdots\right) ,
\quad P= \frac {1}{(\delta t)^k} \left( P^{(0)}+P^{(1)} \delta
  t+\cdots\right)
$$
into the equations \eqref{EP2} for $M$ and $\gamma$, for any $k\in
{\mathbb N}$, one gets
\begin{gather} \label{lead_term_M} M^{(0)} \in\left\{ \rmi
    (\epsilon_{1},\epsilon _{2},\epsilon _{3})^{T}, \; \rmi (-\epsilon
    _{1},-\epsilon _{2},\epsilon _{3})^{T}, \; \rmi (-\epsilon
    _{1},\epsilon _{2},-\epsilon _{3})^{T}, \; \rmi (\epsilon
    _{1},-\epsilon _{2},-\epsilon _{3})^{T} \right\} ,
\end{gather}
with $\epsilon _{\alpha}$ given by \eqref{eps's}, and $P^{(0)}$ is
proportional to $M^{(0)}$.

On the other hand, in view of \eqref{exp_sig}, near $u=0$ we have the expansions
\begin{gather*}
  \frac { \sigma_\alpha(u)}{\sigma(u)}= \frac{1}{u}+O(1), \\
  \prod_{r=1}^s \frac{\sigma(u+ v_{r,\alpha}) \sigma(u-v_{r,\alpha})}{\sigma^{2}(u)} = -
  \frac{\sigma^2(v_{1,\alpha})\cdots \sigma^2(v_{s,\alpha}) }{u^{k-1}}+O(1).
\end{gather*}
with $s=(k-1)/2$. Since in the above expansions $u=\rmi m \cdot \delta
t$, comparing them, we obtain\footnote{ If fact, one can write
  $h_\alpha, c_\alpha$ only in terms of sigma-constants and
  $\sigma(v_j)$, as it was written for the Euler top (the case $k=1$)
  (see \cite{Jac, Whitt}), but this process is tedious and requires
  more calculations.}
$$
h_\alpha = m \epsilon_\alpha, \quad
  c_\alpha = \frac{\epsilon_{\alpha}}{ \sigma^2(v_1) \cdots
    \sigma^2(v_s)} .
$$
Substituting this into \eqref{sols_M}, \eqref{sols_P}, we get
\eqref{sols_Pn}.

The latter, in view of the known relations (see, e.g., \cite{H_Cu,Law})
$$
\frac{\sigma_\alpha^2(u)} {\sigma^2(u)}= \wp(u)-\wp(\Omega_\alpha) ,
\quad \frac{\sigma(u+\beta) \sigma(u-\beta) }{ \sigma^2(\beta)\,
  \sigma^{2}(u)}= \wp(u)-\wp(\beta) ,
$$
implies \eqref{P^2}.

Finally, since $P(u)$ is a solution of the Poisson equations, it
satisfies the integral \eqref{PP}. Setting there $u=\Omega_\alpha$ and
using \eqref{P^2} one obtains \eqref{Pi}.
\end{proof}

\paragraph{Remark.} As follows from the formal Laurent solution for
$M(t)$ with the coefficients \eqref{lead_term_M}, near a pole $t=t_0$
{\it any} vector solution $\gamma(t)=(\gamma_1, \gamma_2, \gamma_3)^T$ of the Poisson equation
\eqref{Poisson} has the expansion
\begin{equation} \label{exp_G} \gamma_\alpha (t) = \frac
  {\text{const}}{(\delta t)^k} \left(\epsilon_\alpha + O(\delta t)
  \right) , \quad \alpha=1,2,3.
\end{equation}

\section{Algebraic structure of elliptic solutions of 2nd kind}
Using the algebraic parameterizations \eqref{par0} and \eqref{ell_P},
we obtain

\begin{theorem} \label{algebraic}
  1) If $k$ is a positive odd integer and $k\ge 3$, then, apart from
  the solution $P(\la)$ in \eqref{ell_P}, the Poisson equations
  \eqref{Poisson} has two independent solutions
$$
\gamma^{(j)} =G^{(j)}= \left(G_1^{(j)},G_2^{(j)},G_3^{(j)}\right)^T,
\qquad j=1,2
$$
which can be represented as the following algebraic functions of the
parameter $\la$ in \eqref{par0}, \eqref{dot_la}
\begin{equation}
  \label{g12}
  \begin{split}
    G_{\alpha}^{(1)} &= c_{1,\alpha} \frac {\sqrt{ Q_{k,\alpha} (\la)}
    }{\sqrt{(\la-c)^k} } \exp
    \left( \, \frac 12 \int W_\alpha \right), \\
    G_{\alpha}^{(2)}&= c_{2,\alpha} \frac {\sqrt{ Q_{k,\alpha} (\la)}
    }{\sqrt{(\la-c)^k} } \exp \left( - \frac 12 \int W_\alpha \right)
    .i
  \end{split}
\end{equation}
Here $c_{1,\alpha}, c_{2,\alpha}$ are certain constants to be
specified below, and
\begin{align}
  \label{Ints}
  W_\alpha & = \frac{ q_{s+1,\alpha} (\la) \cdot (\la-c)^{s} }
  { Q_{k,\alpha} (\la)}\, \frac{ \rmd \la }{ \sqrt{ U_4( \la)}  }, \\
  Q_{k,\alpha}(\la) & = r_{0,\alpha}\prod_{i=1}^{k}(\la-r_{i,\alpha})
  = \textup{const} \cdot (P_\beta^2(\la)+ P_\gamma^2(\la))\cdot (\la-c)^k \qquad \qquad \notag \\
  & = (a_\beta-c)(a_\gamma-a_\beta)(\la-a_\beta) F_{s,\beta}^2(\la) +
  (a_\gamma-c)(a_\gamma-a_\beta)(\la-a_\gamma)
  F_{s,\gamma}^2(\la), \label{QQ} \\
  & \qquad \qquad (\alpha,\beta,\gamma)=(1,2,3), \notag
\end{align}
where the polynomials $F_{s,\alpha} (\la)$ of degree $s=(k-1)/2$ are
specified in~\eqref{ell_P}.

The above formula implies that the zeros of the polynomials
$Q_{k,\alpha}(\la)$ coincide with the zeros of
$P_\beta^2(\la)+P_\gamma^2(\la)$.  \medskip

2) The differential $W_\alpha$ is a meromorphic differential of the
third kind on $\cal E$ having pairs of only simple poles at the points
$ {\cal P}_{i,\alpha}^{\pm} =(r_{i,\alpha}, \pm \sqrt{
  U_4(r_{i,\alpha})})\in {\cal E}$, $i=1,\dots,k$ with residues $\pm
1$ respectively:
\begin{equation} \label{res1} \Res_{ {\cal P}_{i,\alpha}^{\pm} }
  W_{\alpha} = \pm 1, \qquad i= 1,\dots,k .
\end{equation}
Finally, $q_{s+1,\alpha} (\la)$ in \eqref{Ints} are polynomials of
degree $s+1=(k+1)/2$ completely defined by the conditions \eqref{res1}
\end{theorem}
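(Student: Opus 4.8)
The plan is to build the two second-kind solutions out of the already-known elliptic solution $P$, exploiting the $\mathrm{so}(3,\C)$ structure of \eqref{Poisson}, and only afterwards to read off the analytic nature of $W_\alpha$. First I would invoke Theorem \ref{gen_vector_monodromy} to obtain two second-kind solutions $G^{(1)},G^{(2)}$ whose multipliers along $T_j$ are reciprocal, $S_j$ and $S_j^{-1}$ (with $S_j^{2}\neq1$ generically). Since $\cA(t)\in\mathrm{so}(3,\C)$, the bilinear pairing $\langle\gamma,\tilde\gamma\rangle$ of any two solutions is conserved. Each such pairing is also a function of the second kind whose multiplier is the product of the two multipliers: $\langle G^{(j)},G^{(j)}\rangle$ carries $S_j^{2}$, $\langle G^{(j)},P\rangle$ carries $S_j$, while $\langle G^{(1)},G^{(2)}\rangle$ and $\langle P,P\rangle=\Pi$ carry $1$. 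A conserved quantity is constant, and a non-trivial multiplier is incompatible with being constant; hence $\langle G^{(j)},G^{(j)}\rangle=\langle G^{(j)},P\rangle=0$, so $G^{(1)},G^{(2)}$ are isotropic and orthogonal to $P$, whereas $\langle G^{(1)},G^{(2)}\rangle$ is a non-zero constant.

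Next I would use that $\cA\gamma=\Omega(t)\times\gamma$ with $\Omega\propto aM$, so by the Jacobi identity both $v\mapsto v\times P$ and $(u,v)\mapsto u\times v$ send solutions to solutions without altering multipliers. As the $S_j$-eigenspace of the monodromy is one-dimensional, $G^{(j)}\times P=c^{(j)}G^{(j)}$ for constants $c^{(j)}$, and $(G^{(j)}\times P)\times P=\langle G^{(j)},P\rangle P-\Pi\,G^{(j)}=-\Pi\,G^{(j)}$ gives $(c^{(j)})^{2}=-\Pi$; being the two distinct null directions, $c^{(1)}=-c^{(2)}=:c_0$. Likewise $G^{(1)}\times G^{(2)}$ has trivial multiplier, hence is proportional to the elliptic solution: $G^{(1)}\times G^{(2)}=\kappa P$. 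Writing the $\alpha$-component of the two eigen-relations and performing a short elimination with help of $G^{(1)}\times G^{(2)}=\kappa P$ yields the key componentwise identity $G^{(1)}_\alpha G^{(2)}_\alpha=\tfrac{\kappa}{2c_0}(P_\beta^{2}+P_\gamma^{2})$. By \eqref{QQ} this equals a constant times $Q_{k,\alpha}(\la)/(\la-c)^{k}$. This produces the algebraic factor $\sqrt{Q_{k,\alpha}(\la)/(\la-c)^{k}}$ in \eqref{g12} and shows that the two remaining scalar factors are reciprocal; setting $R_\alpha:=G^{(1)}_\alpha/G^{(2)}_\alpha$ and $W_\alpha:=\rmd\log R_\alpha$ gives exactly the representation \eqref{g12}.

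It then remains to identify $W_\alpha$ analytically. Because $R_\alpha$ is a ratio of second-kind functions, $W_\alpha=\rmd\log R_\alpha$ is a single-valued meromorphic differential on $\cE$, with poles only at the zeros and poles of $R_\alpha$. The product $G^{(1)}_\alpha G^{(2)}_\alpha$ is even (a function of $\la$ alone) and, by the previous step, vanishes simply exactly at the $k$ points $\la=r_{i,\alpha}$; over each generic $r_{i,\alpha}$ the curve $E$ has two points ${\cal P}_{i,\alpha}^{\pm}$ exchanged by the hyperelliptic involution $\iota$. A local analysis near ${\cal P}_{i,\alpha}^{\pm}$ shows that the simple zero of the product is carried by $G^{(1)}_\alpha$ on one sheet and by $G^{(2)}_\alpha$ on the other, so $R_\alpha$ has a simple zero and a simple pole there and $\Res_{{\cal P}_{i,\alpha}^{\pm}}W_\alpha=\pm1$, which is \eqref{res1}. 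Since $\iota$ exchanges the two solutions and hence inverts $R_\alpha$, the differential $W_\alpha$ is odd, of the form $\phi(\la)\,\rmd\la/\sqrt{U_4(\la)}$ with $\phi$ rational. Requiring it to be holomorphic at the branch points $a_1,a_2,a_3$ and at $\infty_{\pm}$, to have its only simple poles at the $r_{i,\alpha}$, and to vanish to the order forced at the branch point $\la=c$ by the common leading behaviour \eqref{exp_G} of all solutions at $u=0$, pins $\phi$ down to $q_{s+1,\alpha}(\la)(\la-c)^{s}/Q_{k,\alpha}(\la)$ with $\deg q_{s+1,\alpha}=s+1$. Finally, the residue normalization \eqref{res1} becomes a linear system for the coefficients of $q_{s+1,\alpha}$; a solution exists by the very construction of $W_\alpha$, and it is unique because the only freedom in a third-kind differential with prescribed poles and residues is the addition of the holomorphic differential $\rmd\la/\sqrt{U_4}$, which is excluded by the required vanishing at $\la=c$.

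The \emph{main obstacle} is the local analysis at the two distinguished fibres. Establishing that the residues are exactly $\pm1$ requires showing that $G^{(1)}_\alpha$ and $G^{(2)}_\alpha$ do not vanish simultaneously at ${\cal P}_{i,\alpha}^{\pm}$ and that each zero is simple; and fixing the exponent of $(\la-c)^{s}$ requires a careful expansion at $u=0$, where by \eqref{exp_G} all solutions share the leading direction $(\epsilon_1,\epsilon_2,\epsilon_3)$, so that $R_\alpha$ is regular and non-zero at $\la=c$ while $W_\alpha$ vanishes there to the precise order $2s$. Once these local orders and the residue count are under control, the degree of $q_{s+1,\alpha}$ and its unique determination by \eqref{res1} follow from the Riemann--Roch dimension count for differentials of the third kind on the elliptic curve.
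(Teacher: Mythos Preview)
Your approach is genuinely different from the paper's and, in its core, quite elegant. The paper derives \eqref{g12} via the classical Euler-angle kinematics: it interprets $P$ as the third column of a rotation matrix, invokes the standard formula $\dot\psi=|P|(\tilde\omega_1 P_1+\tilde\omega_2 P_2)/(P_1^2+P_2^2)$, and then writes $G^{(j)}_3=c_j\sqrt{P_1^2+P_2^2}\exp(\pm\rmi\int\dot\psi\,dt)$, from which the polynomial structure of $W_3$ is read off by direct substitution of \eqref{par0}, \eqref{ell_P}. Your route via the $\mathrm{so}(3,\C)$ structure is more intrinsic: the monodromy/isotropy argument and the eigen-relations $G^{(j)}\times P=c^{(j)}G^{(j)}$, $G^{(1)}\times G^{(2)}=\kappa P$ are correct, and the componentwise identity $G^{(1)}_\alpha G^{(2)}_\alpha=\mathrm{const}\,(P_\beta^2+P_\gamma^2)$ is right (up to a sign you can fix). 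That identity alone yields the factor $\sqrt{Q_{k,\alpha}/(\la-c)^k}$ and the reciprocal exponentials, i.e.\ part 1). Your residue argument in part 2) also differs from the paper's (meromorphy forces $\varkappa$ odd, then a zero/pole count on $E$ forces $\varkappa=1$), but your version works once you use the simple-zero observation for $G^{(1)}_\alpha G^{(2)}_\alpha$ and the involution.

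There is, however, a real gap: your justification of the factor $(\la-c)^s$ in $W_\alpha$ is insufficient. Knowing from \eqref{exp_G} that every solution has leading term $\propto\epsilon_\alpha/u^k$ only gives $R_\alpha\to\text{const}$ at $u=0$, hence $W_\alpha$ \emph{regular} there; it does \emph{not} give vanishing to order $2s=k-1$. To get the precise order you would need the first $k$ Laurent coefficients of $G^{(1)}_\alpha$ and $G^{(2)}_\alpha$ to agree, which \eqref{exp_G} does not provide. The cleanest fix, entirely within your framework, is to compute $W_\alpha$ explicitly: from the Poisson equation and your identities,
\[
\frac{d}{dt}\log R_\alpha
=\frac{\dot G^{(1)}_\alpha}{G^{(1)}_\alpha}-\frac{\dot G^{(2)}_\alpha}{G^{(2)}_\alpha}
= -\,\frac{k\kappa}{C}\;\frac{a_\beta M_\beta P_\beta+a_\gamma M_\gamma P_\gamma}{P_\beta^2+P_\gamma^2},
\]
where you use $G^{(1)}_\beta G^{(2)}_\alpha-G^{(2)}_\beta G^{(1)}_\alpha=-\kappa P_\gamma$ and $G^{(1)}_\alpha G^{(2)}_\alpha=C(P_\beta^2+P_\gamma^2)$. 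Substituting \eqref{par0} and \eqref{ell_P} now reproduces exactly the paper's formula \eqref{angle}; the factor $(\la-c)^s$ appears because $M_\beta P_\beta\sim(\la-c)^{-(s+1)}$ while $P_\beta^2+P_\gamma^2\sim(\la-c)^{-(2s+1)}$. With this in hand, the degree of $q_{s+1,\alpha}$ and its uniqueness under \eqref{res1} follow as you indicated. (A minor point: you should also justify that $\iota$ swaps $G^{(1)}$ and $G^{(2)}$; this follows from $M_\alpha(-u)=-M_\alpha(u)$, so $\gamma(-u)$ is again a solution and carries the inverse multiplier.)
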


The algebraic solutions in the classical case $k=1$ will be described
separately in Section \ref{k=1}.

\paragraph{Remark.}
The polynomials $F_{s,1},F_{s,2}, F_{s,3}$ and $Q_{k,1},Q_{k,2},
Q_{k,3}$ are obtained by the corresponding permutation of
$a_1,a_2,a_3$.  Note that their Abel images ($u$-coordinates) of their
roots $\rho_{r,\alpha}, r_{i,\alpha}$ are not obtained from each other
by the translations by the half-periods $\Omega_j$ of the elliptic
curve $E$.  \medskip

\begin{proof}[Proof of Theorem~\ref{algebraic}]
  1) According to the kinematic interpretation, the Poisson equations
  in \eqref{EP2} describes the evolution of a fixed in the space
  vector $\gamma$ in a frame rotating with the angular velocity
  $\tilde\omega= B M$ (also taken in the body frame). Now choose a
  fixed in space ortonormal frame $\{ O, \be_1, \be_2, \be_3\}$.  Let
  $\theta,\psi,\phi$ be the Euler nutation, precession, and rotation
  angles associated to this frame so that the corresponding rotation
  matrix is
$$
{\cal R}= \begin{pmatrix}
  \cos\phi\cos\psi - \cos\theta \sin\psi \sin\phi & \cos\phi\sin\psi + \cos\theta \cos\psi \sin\phi & \sin\phi \sin\theta \\
  -\sin\phi\cos\psi - \cos\theta \sin\psi \cos\phi & -\sin\phi\sin\psi + \cos\theta\cos\psi\cos\phi & \cos\phi \sin\theta \\
  \sin\theta \sin\psi & -\sin\theta \cos\psi & \cos\theta
\end{pmatrix}.
$$
Let $\vec P(t)=(P_1, P_2, P_3)^T$ be a solution of \eqref{EP2}
describing the motion of the vector $|P| {\bf e}_3$. Then, in view of
the structure of $\cal R$, and from the Euler kinematic equations, one
has
$$
P_1 = |P|\sin\phi \sin\theta, \quad P_2 = |P|\cos\phi \sin\theta,
\quad P_3=|P| \cos\theta,
$$
and
\begin{equation} \label{dot_psi} \dot \psi = |P| \frac { \tilde
    \omega_1 P_1 + \tilde \omega_2 P_2}{P_1^2+ P_2^2},
\end{equation}
see e.g., \cite{Whitt}. Hence the thirds components of the other two
independent solutions $ G^{(1)}$, and $G^{(2)}$ of \eqref{Poisson} can
be written in the complex form
\begin{equation} \label{det_angular} G_3^{(j)} = \tilde c_{j}
  \sin\theta \exp(\pm \rmi \psi), \quad \text{or} \quad
  G_3^{(j)}=c_{j} \sqrt{P_1^2+ P_2^2} \exp\left(\pm \rmi \int\dot
    \psi\, \rmd t \right) ,
\end{equation}
$\tilde c_{j}$, and $c_{j}$ are certain constants, and $j=1,2$.  Using
the parametrization \eqref{ell_P} for $P_\alpha(\la)$, as well as
relations \eqref{par0} and \eqref{dot_la}, we get
\begin{equation} \label{angle}
\rmi\dot \psi \, dt = \rmi \sqrt{\Delta_k} k \frac{ a_1 M_1(\la)
  P_1(\la) + a_2 M_2(\la) P_2(\la) }{P_1^2(\la)+ P_2^2(\la) } \,
\frac{d\la}{2 m \sqrt{U_4(\la)}} := \frac 12 W_3 .
\end{equation}
After simplifications this takes the form
$$
\rmi\dot \psi \, dt = \frac 12 \frac{ q_{s+1,3} (\la) \cdot
  (\la-c)^{s} } { Q_{k,3} (\la)}\, \frac{ d \la }{ \sqrt{ U_4( \la)}
},
$$
with
\begin{gather*}
  q_{s+1,3} (\la) = \frac{\rmi\sqrt{\Delta_k}\, k}{m}
  [ (a_2-c)(a_3-a_2)b_1 \cdot (\la-a_1) F_{s,1}(\la) \\
  \hskip 6cm + (a_1-c)(a_1-a_3)b_2 \cdot (\la-a_2) F_{s,2}(\la) ] , \\
  Q_{k,3} (\la) = (a_2-c)(a_3-a_2)\cdot (\la-a_1) F_{s,1}^2(\la)+ (a_1-c)(a_1-a_3)\cdot(\la-a_2) F_{s,2}^2(\la), \\
  \sqrt{P_1^2+ P_2^2} = \text{const}\, \frac {\sqrt{ Q_{k,3} (\la)}
  }{\sqrt{(\la-c)^k} } .
\end{gather*}
The above implies the formulas \eqref{g12}--\eqref{QQ} for $\alpha=3$.
Repeating the same geometric argumentation for $\alpha=1,2$, we get
the whole set of formulas of Theorem \ref{algebraic}.  \medskip

2) The differential $W_\alpha$ in \eqref{Ints} has simple poles at
$\la\in \{r_{1,\alpha},\dots, r_{k,\alpha}\}$, and each of them
corresponds to two points ${\cal P}_{i,\alpha}^{\pm}$ on $E$. In view
of the degrees of polynomials $Q_{k,\alpha}(\la)$, and
$q_{s+1,\alpha}$, this differential does not have poles at the
infinite points $\infty_\pm$ on $E$. Next, $\rmd \la/\sqrt{U_4( \la)}$
does not have neither poles nor zeros on $E$.  Hence $W_\alpha$ is a
differential of the third kind.

Next, the residuum conditions \eqref{res1} are necessary for the
solutions \eqref{g12} to be meromorphic in $t$, or $u$ and, locally,
in $\la$. Namely, let $\tau=\la-r_{i,\alpha}$ be a local coordinate on
$E$ near the root $r_{i,\alpha}$ and the meromorphic differentials
have the expansion
$$
W_\alpha = \left( \frac{\varkappa}{\tau}+ O(1) \right) \rmd\tau.
$$
Assume $\varkappa>0$. Then, as follows from \eqref{det_angular} for $\alpha=3$, the
leading term of the expansion of the solution $\Gamma_3$ has the
form
$$
\text{const}\cdot \sqrt{\tau} \exp\left( \frac{\varkappa}{2} \ln \tau
\right ) = \text{const}\cdot \sqrt{\tau} \, \tau^{\varkappa/2}.
$$
Hence, $\varkappa$ must be 1 or $3,5,\dots$. Since $\Gamma_3^{(1,2)}$ is an
elliptic function of the second kind, the total number of its zeros on
$E$ must be equal to that of its poles (with multiplicity), that is,
$k$, therefore the residuum $\varkappa$ must be 1. The same argumentation for $\alpha=1,2$ completes the proof.
\end{proof}

\section{Sigma-function solutions of 2nd kind}
In order to convert the algebraic solutions of Theorem \ref{algebraic}
to analytic ones, we shall need the following formula.
\begin{proposition} \label{3rd_kind} Let $K_{k}(\la)$ be a
  polynomial of odd degree $k$, and
$$
W = \frac{K_{k}(\la)}{Q_k(\la)}\,
\frac{ \rmd \la }{2 \sqrt{ U_4( \la)} }, \qquad Q_k(\la)= r_0(\la-r_1)\cdots (\la-r_k)
$$
be a differential of the third kind on the degree 4 curve $E$ with simple poles
at the points ${\cal P}_{j}^{\pm} =(r_{j}, \pm \sqrt{U_4(r_{j})})$,
$j=1,\dots,k$ with residues $\pm 1$ respectively.  Let the point $(\la,\mu)\in E$ and
$u\in {\mathbb C}$ be related by the Abel map \eqref{A1}. Then
\begin{gather}
  \int_{(c,0)}^{(\la,\mu)} W = \log \frac { \sigma (u- w_1)\cdots
    \sigma(u-w_k)} {\sigma (u+w_1)\cdots \sigma(u+w_k) }
  + 2 [ \zeta(w_1) +\cdots+ \zeta(w_k) ]u + \delta \, u -\pi \,\rmi, \label{del'} \\
  \frac {\sqrt{ (\la-r_1)\cdots (\la-r_k) } }{\sqrt{(\la-c)^k}} =
  \textup{const} \frac{ \sqrt{ \sigma (u- w_1)\cdots \sigma(u-w_k)
      \,\sigma (u+ w_1)\cdots \sigma(u +w_k) } }{\sigma^k (u) },
  \label{s_roots}
\end{gather}
where, as above, $\zeta(u)$ is the Weierstrass zeta-function, $\delta=K(c)/Q(c)$, and
\begin{equation} \label{w_j}
w_j = \rmi \int_{(c,0)}^{{\cal P}_{j}^{-}} \frac{\rmd \la }{2\sqrt{U_4( \la)} }\,.
\end{equation}
The correct signs of the roots $w_j$ can be chosen from the conditions
\begin{equation} \label{sign_w_j}
 \frac{K_{k}(r_j)}{(r_j-c) Q'_k(r_j)}= - \frac{3 \rmi}{(a_1-c)(a_2-c)(a_3-c)} \wp'(w_j).
\end{equation}
\end{proposition}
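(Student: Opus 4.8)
The plan is to recognize $W$ as a differential of the third kind on the elliptic curve $E$ and to match it against an explicit \emph{model} built from Weierstrass $\zeta$-functions, exploiting that on a genus-one curve the space of third-kind differentials with prescribed poles and residues is an affine space over the one-dimensional space $\C\,\rmd u$ of holomorphic differentials. Writing every object in the uniformizing coordinate $u$ of \eqref{A1}, I would first record that the hyperelliptic involution $(\la,\mu)\mapsto(\la,-\mu)$ fixes the branch point $(c,0)$ and acts by $u\mapsto -u$, so that the two points ${\cal P}_{j}^{\pm}=(r_j,\pm\sqrt{U_4(r_j)})$ have Abel images $\mp w_j$ with $w_j$ as in \eqref{w_j}; this is what lets me place the poles symmetrically about the origin.

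Next I would introduce the elliptic function
$$
\widetilde W(u)=\sum_{j=1}^{k}\bigl[\zeta(u-w_j)-\zeta(u+w_j)\bigr]\,\rmd u ,
$$
which is genuinely doubly periodic, because the quasi-period jumps $\zeta(u+2\Omega)=\zeta(u)+2\eta$ cancel in each difference, and whose Laurent expansion via $\zeta(s)=s^{-1}+O(s^{3})$ shows it has only simple poles, with residues $\pm1$ at $u=\pm w_j$ — matching the prescribed data \eqref{res1} once the signs of the $w_j$ are chosen as in \eqref{sign_w_j}. Hence $W-\widetilde W=C\,\rmd u$ for a single constant $C$, which I fix by evaluating at the base point $u=0$, where $\la=c$: using $\rmd u=\rmi\,\rmd\la/(2\sqrt{U_4})$ one sees $W/\rmd u$ is proportional to $K_k(c)/Q_k(c)=\delta$, while $\widetilde W/\rmd u$ reduces by oddness of $\zeta$ to a multiple of $\sum_j\zeta(w_j)$; this fixes $C$ and reproduces the linear term $\bigl(2\sum_j\zeta(w_j)+\delta\bigr)u$. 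Integrating $W=\widetilde W+C\,\rmd u$ from $u=0$ and using $\int\zeta=\log\sigma$ together with $\sigma(-w_j)=-\sigma(w_j)$ then yields the ratio $\prod_j\sigma(u-w_j)/\prod_j\sigma(u+w_j)$, and since $k$ is odd the residual $k\log(-1)$ collapses to $-\pi\rmi$ modulo $2\pi\rmi$; this is \eqref{del'}.

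For \eqref{s_roots} I would start from the classical identity $\sigma(u-w)\sigma(u+w)/\sigma^{2}(u)=\sigma^{2}(w)\bigl(\wp(u)-\wp(w)\bigr)$ already used in the proof of Proposition~\ref{M_P}, so that
$$
\frac{\prod_{j}\sigma(u-w_j)\sigma(u+w_j)}{\sigma^{2k}(u)}
=\Bigl(\textstyle\prod_j\sigma^{2}(w_j)\Bigr)\prod_{j}\bigl(\wp(u)-\wp(w_j)\bigr).
$$
The birational map \eqref{BT} expresses $\wp(u)$ as a M\"obius function of $\la$ whose only pole is at $\la=c$; since $\wp(w_j)$ is the $\wp$-value at the point with $\la=r_j$, the difference $\wp(u)-\wp(w_j)$ vanishes precisely at $\la=r_j$ and has its single pole at $\la=c$, whence $\wp(u)-\wp(w_j)=\mathrm{const}\cdot(\la-r_j)/(\la-c)$. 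Multiplying over $j$ and taking the square root turns the right-hand side into $\mathrm{const}\cdot\sqrt{\prod_j(\la-r_j)}/\sqrt{(\la-c)^{k}}$, which is \eqref{s_roots}.

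Finally, the sign of each $w_j$ is not fixed by $\wp(w_j)$ alone (as $\wp$ is even), so I would pin it down through the odd function $\wp'$: under \eqref{BT} one has $w=\mu/(\la-c)^{2}$ with $z=\wp(u),\,w=\wp'(u)$ on the canonical curve \eqref{canon_W}, so $\wp'(w_j)$ is a fixed nonzero multiple of $\mu({\cal P}_j)=\pm\sqrt{U_4(r_j)}/(r_j-c)^{2}$, and the residue normalization $\Res_{{\cal P}_j^{+}}W=+1$ forces $\sqrt{U_4(r_j)}=K_k(r_j)/\bigl(2Q_k'(r_j)\bigr)$, thereby selecting the branch; collecting the scaling factor of the differential in \eqref{BT} (which carries the $\rmi$ and the product $(a_1-c)(a_2-c)(a_3-c)$) then gives precisely \eqref{sign_w_j}. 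The main obstacle is not this conceptual skeleton, which is the standard reduction of a third-kind integral to $\zeta$- and $\sigma$-functions, but the consistent sign and branch bookkeeping throughout: the factors of $\rmi$ coming from the Abel map \eqref{A1} (where $U_4$ carries a minus sign), the parities of $\sigma,\zeta,\wp,\wp'$, the collapse of $k\log(-1)$ to $-\pi\rmi$ using the oddness of $k$, and the determination of the exact constant $-3\rmi/\bigl((a_1-c)(a_2-c)(a_3-c)\bigr)$ from the normalization of \eqref{BT}. This is where I expect the bulk of the careful work to lie.
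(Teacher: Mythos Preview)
Your proposal is correct and follows essentially the same strategy as the paper: both arguments build a model third-kind differential out of $\zeta$-functions with the prescribed simple poles and residues $\pm1$, observe that it can differ from $W$ only by a holomorphic differential $C\,\rmd u$, and then pin down $C$ by evaluating at $u=0$ (equivalently $\la=c$, or $z=\infty$ on the canonical curve). The only cosmetic difference is that the paper first states and proves the analogue on the canonical Weierstrass curve $\mathcal{E}$ (Theorem~\ref{canon_3d}) and then pulls the result back through the birational map \eqref{BT}, whereas you work directly in the uniformizer $u$; the content, including the determination of $\delta$, the $-\pi\rmi$ from $\sigma(-w_j)=-\sigma(w_j)$ with $k$ odd, and the sign selection via $\wp'(w_j)$, is the same.
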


The proposition is a reformulation of known relations in the theory of
elliptic functions, its proof is purely technical and given in Appendix 2.

Note that if the polynomial $K_k(\la)$ contains the factor $(\la-c)^s$, $s\ge 1$, the constant
$\delta$ in \eqref{del} is zero.

Theorem \ref{algebraic} and Proposition \ref{3rd_kind} allow us to
formulate the following theorem.
\begin{theorem} \label{comp_sol_theorem} 1) The two complex vector
  elliptic solutions of the second kind of the Poisson equations are
  \begin{gather} \label{2nd_kind_sol_}
    G^{(1)}(u)=\left(G_1^{(1)} ,G_2^{(1)} , G_3^{(1)} \right)^T, \qquad
G^{(2)}(u)=\left(G_1^{(2)} ,G_2^{(2)} , G_3^{(2)}\right)^T, \\
    G_\alpha^{(1)} (u) = \epsilon_\alpha \rme^{\Theta_\alpha u}
    \prod_{l=1}^{k} \frac{ \sigma(u-w_{l,\alpha})}
    {\sigma(u)\,\sigma(-w_{l,\alpha})} , \qquad G_\alpha^{(2)} (u) =
    \epsilon_\alpha \rme^{-\Theta_\alpha u} \prod_{l=1}^{k} \frac{
      \sigma(u+w_{l,\alpha})} {\sigma(u)\,\sigma(w_{l,\alpha})} , \notag \\
w_{l,\alpha}= \rmi \int_{c}^{ r_{l,\alpha} } \frac{\rmd \la }{2\sqrt{U_4( \la)} }\, , \qquad l=1,\dots,k,
\quad \alpha =1,2,3, \notag
  \end{gather}
where $r_{l,\alpha}$ are the roots of the polynomials $Q_\alpha(\la)$ in \eqref{QQ}, 
the signs of $w_{j,i}$ are defined according to \eqref{w_j}, \eqref{sign_w_j}. Next,  
$\epsilon_{\alpha }$ are specified in \eqref{eps's}, $u=\rmi m t+ \Omega_2$, and
  \begin{gather*}
    \Theta_1 = \sum_{j=1}^k \zeta(w_{j,1}), \quad \Theta_2 =
    \sum_{j=1}^k \zeta(w_{j,2}), \quad \Theta_3 =\sum_{j=1}^k
    \zeta(w_{j,3}).
  \end{gather*}
  Together with \eqref{sols_Pn}, \eqref{eps's}, the expressions
  \eqref{2nd_kind_sol_} form a complete basis of independent solutions
  of the equations \eqref{Poisson}. % That is, all the solutions are single-valued.
\medskip

  2) Let also
$$
\Sigma_1 = \sum_{j=1}^k w_{j,1}, \quad \Sigma_2 = \sum_{j=1}^k
w_{j,2}, \quad \Sigma_3 =\sum_{j=1}^k w_{j,3}.
$$
The solutions \eqref{2nd_kind_sol_} have the
quasi-monodromy % Under the change by the quasi-period
\begin{gather} \label{mon_alpha} G_\alpha^{(1)} (u+ 2\Omega_j) =
  (-1)^{\delta_{\alpha j}} s_j \, G_\alpha^{(1)} (u), \quad
  G_\alpha^{(2)} (u+ 2\Omega_j) = (-1)^{\delta_{\alpha j}} s_j^{-1}\, G_\alpha^{(2)} (u),    \\
  j=1,2,3, \quad \alpha = 1,2,3 \notag
\end{gather}
and imply the vector monodromy
\begin{equation} \label{vec_momodromy} G^{(1)}(u+ 4\Omega_j)= s_j^2
  G^{(1)} (u), \quad G^{(2)}(u+ 4\Omega_j)= s_j^{-2} G^{(2)}(u),
\end{equation}
where
\begin{align}
  s_1 & = - \exp( 2 \Theta_1\Omega_1-2 \Sigma_1\eta_1)= \exp( 2
  \Theta_2\Omega_1-2 \Sigma_2 \eta_1)=
  \exp( 2 \Theta_3\Omega_1-2 \Sigma_3 \eta_1) , \notag \\
  s_2 & =- \exp( 2 \Theta_2\Omega_2-2 \Sigma_2\eta_2)= \exp( 2
  \Theta_1\Omega_2-2 \Sigma_1 \eta_2)
  = \exp( 2 \Theta_3\Omega_2-2 \Sigma_3 \eta_2), \label{s_j} \\
  s_3 & = - \exp( 2 \Theta_3\Omega_3- 2\Sigma_3\eta_3)= \exp( 2
  \Theta_2\Omega_3-2 \Sigma_2 \eta_3) = \exp( 2 \Theta_1\Omega_3-2
  \Sigma_1 \eta_3), \notag
  % = \exp\left( 2 \sum_{l=1}^k \left[ \zeta(w_i) \Omega_j- w_i \zeta
  %     (\Omega_j) \right] \right) \quad \textup{for any} \quad
  % \alpha=1,2,3.
\end{align}
3) If $\Omega_j$ is the imaginary half-period, then $|s_j|=1$. For the
real half-period $\Omega_2$ one has $|s_2| \ne 1$.  Moreover,
\begin{equation} \label{diff_Sigma_Theta}
\Sigma_\alpha-\Sigma_\beta
  =\Omega_\gamma \quad \textup{mod} \; \{ 2\Omega_1{\mathbb
    Z}+2\Omega_2{\mathbb Z}\}, \quad \Theta_\alpha-\Theta_\beta
  =\eta_\gamma \quad \textup{mod} \; \{ 2\eta_1{\mathbb
    Z}+2\eta_2{\mathbb Z}\}.
\end{equation}
for $(\alpha,\beta,\gamma)=(1,2,3)$.  \medskip

\noindent 4) Finally, for any $u\in {\mathbb C}$,
\begin{gather} \label{zero_sum}
  \pairing{ G^{(1)}(u)} {G^{(1)}(u)} =0, \qquad \pairing{ G^{(2)}(u)}{G^{(2)}(u)} =0, \\
 G_\alpha^{(1)}(u) G_\alpha^{(2)}(u)= - \epsilon_\alpha^2 \prod_{r=1}^k (\wp(u)-\wp(w_{r,\alpha})), \label{G12}
\end{gather}
and
\begin{gather} \label{G^2}
  \begin{aligned}
    \left[ G_\alpha^{(1)} (\Omega_j)\right]^2 & = s_j \;
    (-1)^{1-\delta_{\alpha j}} \epsilon_i^2 \prod_{r=1}^k
    (\wp(\Omega_j)-\wp(w_{r,\alpha})),  \\
    \left[ G_\alpha^{(2)} (\Omega_j)\right]^2 & = s_j^{-1}
    (-1)^{1-\delta_{\alpha j}} \epsilon_i^2 \prod_{r=1}^k
    (\wp(\Omega_j)- \wp(w_{r,\alpha})),
  \end{aligned} \\
  \alpha,j=1,2,3.  \notag
\end{gather}
Moreover, for any $u\in {\mathbb C}$,
\begin{align}
 - \langle G^{(1)} (u), G^{(2)}(u)\rangle & =
\epsilon_1^2 \prod_{l=1}^k (\wp(\Omega_1)-\wp(w_{l,1})) \notag \\
  = \epsilon_2^2 \prod_{l=1}^k (\wp(\Omega_2)-\wp(w_{l,2}))
 & = \epsilon_3^2 \prod_{l=1}^k (\wp(\Omega_3)-\wp(w_{l,3}))=
  - \Pi, \label{eq_Pi}
\end{align}
where the constant $\Pi$ is defined in \eqref{PP}, \eqref{Pi}.
\end{theorem}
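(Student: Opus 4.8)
The plan is to assemble the explicit formulas of part~1) by feeding the third‑kind integral formula of Proposition~\ref{3rd_kind} into the algebraic representation \eqref{g12} of Theorem~\ref{algebraic}, and then to read off parts~2)--4) from the resulting sigma‑quotient expressions. First I would match the differential $W_\alpha$ of \eqref{Ints} with the differential $W$ of Proposition~\ref{3rd_kind}, taking $K_k(\la)=2\,q_{s+1,\alpha}(\la)(\la-c)^{s}$ and $Q_k=Q_{k,\alpha}$; since $K_k$ carries the factor $(\la-c)^s$ with $s\ge1$, the constant $\delta$ vanishes. Substituting \eqref{del'} and \eqref{s_roots} into \eqref{g12}, the two square roots combine: the prefactor contributes $\sqrt{\prod\sigma(u-w_l)\,\sigma(u+w_l)}/\sigma^k(u)$ and the exponential contributes $\rme^{\Theta_\alpha u}\sqrt{\prod\sigma(u-w_l)/\sigma(u+w_l)}$, whose product is $\rme^{\Theta_\alpha u}\prod\sigma(u-w_{l,\alpha})/\sigma^k(u)$ up to a constant. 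I would fix that constant by comparing the leading Laurent coefficient at $u=0$ with \eqref{exp_G}, which forces the normalisation $\prod_l\sigma(-w_{l,\alpha})$ in the denominator and yields \eqref{2nd_kind_sol_}. Independence and completeness then follow from Theorem~\ref{gen_vector_monodromy}, since $P$ is of the first kind while $G^{(1)},G^{(2)}$ are of the second kind with reciprocal multipliers.

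For part~2) the componentwise factors come from a direct application of the quasi‑periodicity $\sigma(u+2\Omega_j)=-\rme^{2\eta_j(u+\Omega_j)}\sigma(u)$: each ratio $\sigma(u-w_l)/\sigma(u)$ acquires $\rme^{-2\eta_j w_l}$ and the exponent acquires $\rme^{2\Theta_\alpha\Omega_j}$, so that $G_\alpha^{(1)}(u+2\Omega_j)=\rme^{2\Theta_\alpha\Omega_j-2\Sigma_\alpha\eta_j}\,G_\alpha^{(1)}(u)$. The crucial point is to identify this factor with $(-1)^{\delta_{\alpha j}}s_j$, for which I would use the discrete symmetry of the system: shifting $u\mapsto u+2\Omega_j$ flips the signs of $M_\alpha$ for $\alpha\ne j$ by \eqref{quasi}, which is exactly the conjugation ${\cal A}(u+2\Omega_j)=D_j{\cal A}(u)D_j$ by the diagonal involution $D_j=\diag\!\big((-1)^{1-\delta_{1j}},(-1)^{1-\delta_{2j}},(-1)^{1-\delta_{3j}}\big)$. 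Hence $D_jG^{(1)}(u+2\Omega_j)$ is again a second‑kind solution with the same multiplier as $G^{(1)}$, so it is a scalar multiple of $G^{(1)}(u)$; this produces the sign pattern \eqref{mon_alpha} with a single $\alpha$‑independent $s_j$, and \eqref{vec_momodromy} follows by iterating.

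The main obstacle is part~3). Equating the two descriptions of the half‑period multiplier gives $\rme^{2\Theta_\alpha\Omega_j-2\Sigma_\alpha\eta_j}=(-1)^{\delta_{\alpha j}}s_j$; taking the ratio of the $\alpha$‑ and $\beta$‑components eliminates $s_j$ and yields the congruences $2(\Theta_\alpha-\Theta_\beta)\Omega_j-2(\Sigma_\alpha-\Sigma_\beta)\eta_j\equiv\pi\rmi(\delta_{\alpha j}-\delta_{\beta j})\pmod{2\pi\rmi}$ for $j=1,2,3$. Treating $\Theta_\alpha-\Theta_\beta$ and $\Sigma_\alpha-\Sigma_\beta$ as unknowns and using two independent half‑periods, the determinant of the linear system is $4(\eta_\gamma\Omega_\alpha-\eta_\alpha\Omega_\gamma)=\pm2\pi\rmi$ by the Legendre relation, so the solution modulo the period lattice is unique; one checks that $\Theta_\alpha-\Theta_\beta=\eta_\gamma$, $\Sigma_\alpha-\Sigma_\beta=\Omega_\gamma$ solves it, which is \eqref{diff_Sigma_Theta} and simultaneously verifies the three‑fold consistency of the definition \eqref{s_j} of $s_j$. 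The modulus statements follow from Theorem~\ref{gen_vector_monodromy}: under $u=\rmi m t+\Omega_2$ real time corresponds to the imaginary $u$‑direction, so the unimodular real‑time multiplier there gives $|s_j|=1$ for the imaginary half‑period, while the real half‑period $\Omega_2$ carries the genuine growing second‑kind factor, $|s_2|\ne1$.

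Finally, part~4) is largely computational once the above is in place. The identity \eqref{G12} is immediate from $\sigma(u+w)\sigma(u-w)=\sigma^2(w)\sigma^2(u)(\wp(u)-\wp(w))$ together with $\sigma(-w)=-\sigma(w)$ and $k$ odd, which produces the sign $-\epsilon_\alpha^2$. For \eqref{zero_sum} I would invoke that ${\cal A}\in\mathrm{so}(3,\C)$ makes every pairwise form $\langle\cdot,\cdot\rangle$ of solutions $u$‑independent; since $\langle G^{(i)},G^{(i)}\rangle$ would scale by $s_j^{4}$ under $u\mapsto u+4\Omega_j$ and $|s_2|\ne1$, it must vanish. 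The same invariance gives $\langle G^{(1)},G^{(2)}\rangle=\mathrm{const}$ and $\langle P,G^{(i)}\rangle=0$; combined with the fact that $\mathrm{so}(3,\C)$ preserves the cross product, $G^{(1)}\times G^{(2)}$ is a first‑kind solution, hence proportional to $P$. Evaluating \eqref{G12} at $u=\Omega_\alpha$, where $P_\alpha(\Omega_\alpha)=0$ by \eqref{P^2}, and using this cross‑product relation together with the normalisation $\langle P,P\rangle=\Pi$, I would identify each $\epsilon_\alpha^2\prod_l(\wp(\Omega_\alpha)-\wp(w_{l,\alpha}))$ with the common constant $-\Pi$, which is \eqref{eq_Pi}; the squares \eqref{G^2} come from the same evaluation combined with the parity relation $G_\alpha^{(1)}(-u)=-G_\alpha^{(2)}(u)$ and the quasi‑monodromy \eqref{mon_alpha}.
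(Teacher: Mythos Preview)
Your proposal is correct and largely follows the paper's strategy: assemble the sigma--quotient formulas by feeding Proposition~\ref{3rd_kind} into the algebraic solutions of Theorem~\ref{algebraic}, normalise via the Laurent expansion \eqref{exp_G}, and then read off the monodromy and the quadratic identities. Two of your tactical choices, however, differ from the paper's in ways worth recording.

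For part~2), you obtain the sign pattern \eqref{mon_alpha} by exploiting the discrete symmetry ${\cal A}(u+2\Omega_j)=D_j{\cal A}(u)D_j$ of the coefficient matrix, which forces $D_jG^{(1)}(u+2\Omega_j)$ to be a scalar multiple of $G^{(1)}(u)$. The paper instead computes the $\alpha=j$ multiplier directly from the sigma quasi-periodicity and then uses the orthogonality $\langle G^{(1)},P\rangle\equiv 0$ together with the known sign flips of $P_\alpha(u+2\Omega_j)$ (from \eqref{quasi}, \eqref{sols_Pn}) to propagate the pattern to the remaining components. Your symmetry argument is cleaner but needs the observation that the three solutions $P,G^{(1)},G^{(2)}$ have pairwise distinct Floquet multipliers, so that a second--kind solution with the given multiplier is unique up to scale.

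For part~3), you solve the linear congruences $2(\Theta_\alpha-\Theta_\beta)\Omega_j-2(\Sigma_\alpha-\Sigma_\beta)\eta_j\equiv\pi\rmi(\delta_{\alpha j}-\delta_{\beta j})\pmod{2\pi\rmi}$ directly via the Legendre relation, which gives both halves of \eqref{diff_Sigma_Theta} simultaneously. The paper takes a more classical route: it shows that $G_\alpha^{(1)}/G_\gamma^{(1)}$ is elliptic with periods $4\Omega_\alpha,2\Omega_\gamma$ and applies Abel's theorem to its zeros and poles to get $\Sigma_\alpha-\Sigma_\gamma\equiv\Omega_\beta$, then deduces the $\Theta$-relation by comparing the two forms of the multiplier. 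Your approach is shorter; the paper's makes the geometric origin of the half-period shifts transparent.

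For part~4), the paper establishes \eqref{G^2} by a direct sigma computation using the identity \eqref{symm_sigma}, and then obtains \eqref{eq_Pi} by invoking the algebraic relation $Q_{k,\alpha}(\la)=\mathrm{const}\cdot(P_\beta^2+P_\gamma^2)(\la-c)^k$ from \eqref{QQ}, which translates into \eqref{aux2} and a comparison of leading $1/u^{2k}$ coefficients. Your cross-product idea $G^{(1)}\times G^{(2)}=cP$ is valid, but by itself it does not pin down the constant; you will still need some input equivalent to \eqref{QQ} (or the explicit $C=-\epsilon_\alpha^2$ computation) to identify $\epsilon_\alpha^2\prod_l(\wp(\Omega_\alpha)-\wp(w_{l,\alpha}))$ with $-\Pi$.
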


The proof of the Theorem can be found in Appendix 2. 

\paragraph{Remark.} One can easily recognize that, for each index $\alpha$,
the components $G^{(1)}_\alpha (u), G^{(2)}_\alpha (u)$ have the same
structure as solutions of the Lame equation
$$
  \frac{d^2\Lambda}{d\, u^2}=( n(n+1)\wp(u)+ B )\Lambda, \quad n\in{\mathbb N}, \quad B=\text{const}
$$
with $n=k$ (see, e.g., \cite{WhW}), namely,
$$
 \Lambda_1= \prod_{s=1}^k \left(\frac{\sigma (u-h_s)}{\sigma(u) \, \sigma(h_s)} \exp(\zeta(h_s)u )\right), \quad
\Lambda_2= \prod_{s=1}^k \left(\frac{\sigma (u+h_s)}{\sigma(u) \, \sigma(h_s)} \exp(-\zeta(h_s)u )\right),
$$
where the zeros $h_1,\dots,h_k$ satisfy various conditions, in particular,
$$
\wp(h_1)+ \cdots + \wp(h_k)= k B.
$$
However, as numerical tests show, the zeros $w_{1,\alpha}, \dots, w_{k,\alpha}$ of the solutions \eqref{2nd_kind_sol_} do not satisfy all the conditions on $h_1,\dots,h_k$. Hence $G^{(1)}_\alpha(u), G^{(2)}_\alpha(u)$ cannot be solutions of the Lame equation.

Thus, if the relation between the Poisson equations \eqref{Poisson} and the Lame equation (or some of its generalizations) exists, it should be a rather non-trivial one.

% which, in view of \eqref{symm_sigma}, equals and using
% \eqref{diff_theta}, \eqref{diff_sigma}, and \eqref{mon_2}, we get
% the above formulas. \medskip

\section{The classical case $k=1$}\label{k=1}
The Poisson equations in this case were first integrated by C. Jacobi \cite{Jac},
who used previous results of Legendre (see, e.g., \cite{Whitt}).
The case does not fit completely into Theorems \ref{algebraic}, \ref{comp_sol_theorem} because the corresponding
meromorphic differentials \eqref{Ints} do not contain the factor $(\la-c)^s$, and the solutions do not have precisely
the structure of \eqref{2nd_kind_sol_}.

Namely, now the elliptic solution $P$ is just $M(u)$ given by \eqref{sols_Pn} and the algebraic solutions \eqref{g12} reread
$$
G_\alpha^{(1,2)} = c_\alpha \sqrt{M_\beta^2+ M_\gamma^2} \exp \left(\pm \frac 12 \int W_\alpha \right), \qquad
(\alpha,\beta,\gamma)=(1,2,3).
$$
Set, for concreteness, $\alpha=3$. Using the algebraic parameterization \eqref{par0} for $M(\la)$,
from \eqref{angle} we get
\begin{align}
W_3 &= 2\rmi \frac{a_1 M_1^2(\la)+ a_2 M_2^2(\la) }{ M_1^2(\la)+ M_2^2(\la) } dt \notag \\
 & = 2 \frac{ (ca_3-a_1 a_2) \la +c(a_1a_2-a_1 a_3-a_2 a_3)-a_1a_2 a_3}{(c+a_3-a_1-a_2)\la-(ca_3-a_1 a_2)}\,
\frac{d\la}{2 \sqrt{U_4(\la)}}.  \label{W33}
\end{align}
This is a differential of 3rd kind having a pair of simple poles $(\la^*,\pm \sqrt{U_4(\la^*)})$ on $E$ with
$$
\la^*=\frac{a_1 a_2-c a_3}{c+a_3-a_1-a_2} .
$$
Observe that in \eqref{W33}
$$
 2\frac{ (ca_3-a_1 a_2) \la +c(a_1a_2-a_1 a_3-a_2 a_3)-a_1a_2 a_3}{(c+a_3-a_1-a_2)\la-(ca_3-a_1 a_2)}\bigg |_{\la=c}=2 a_3.
$$
Then, according to Proposition \ref{3rd_kind},
\begin{gather*}
  \int_{(c,0)}^{(\la,\mu)} W_3 = \log \frac {\sigma (u-w_3)} {\sigma (u+w_3)}
  + 2 [ \zeta(w_3)+ a_3 ]u - \pi \rmi, \\
\sqrt{ M_1^2+ M_2^2} = \text{const}\, \frac{\sqrt{\sigma(u-w_3) \, \sigma(u+w_3) } }{\sigma(u)} ,
\end{gather*}
where $w_3$ is the Abel image of the pole $(\la^*,\mu)$ of $W_3$ with residuum $-1$,
$$
 w_3 = \int_{(c,0)}^{\la^*} \frac{\rmi \, d\la}{2\sqrt{U_4(\la)}} .
$$

As a result, up to multiplication by $-1$, the elliptic solutions of 2nd kind are
\begin{equation} \label{G12_k1}
\begin{aligned}
 G_\alpha^{(1)}(u) &= \epsilon_\alpha \frac{\sigma(u-w_\alpha)}{\sigma(u) \,\sigma(-w_\alpha)}
\exp{[(\zeta(w_\alpha)+a_\alpha)u]}, \\
G_\alpha^{(2)}(u) & = \epsilon_\alpha \frac{\sigma(u-w_\alpha)}{\sigma(u) \,\sigma(-w_\alpha)}
\exp{[-(\zeta(w_\alpha)+a_\alpha)u]}
\end{aligned}
\end{equation}
% with $\alpha=1,2,3$,
(compare with \eqref{2nd_kind_sol_}),
where $w_\alpha$ denote the Abel image of the pole of the differential $W_\alpha$ with the residuum $-1$, and,
as above, $u=\rmi m t+\Omega_2$.

As follows from item 3 of Theorem \ref{comp_sol_theorem}, here
\begin{gather*}
\begin{aligned}
w_\alpha-w_\beta & =\Omega_\gamma \quad \textup{mod} \;
\{ 2\Omega_1{\mathbb Z}+2\Omega_2{\mathbb Z}\}, \\ 
\zeta(w_\alpha)-\zeta(w_\beta)
  & =\eta_\gamma=\zeta(\Omega_\gamma) \quad \textup{mod} \; \{ 2\eta_1{\mathbb Z}+2\eta_2{\mathbb Z}\} , 
\end{aligned} \\
(\alpha,\beta,\gamma)=(1,2,3).
\end{gather*}
Then, introducing
$$
 w^*= w_\alpha-\Omega_\alpha, \quad  \theta^*=\zeta( w_\alpha)-\eta_\alpha=\zeta(w^*) \quad \text{for any $\alpha$}
$$
and using the definition \eqref{sig} of the sigma-functions with indices, one can represent the complex solutions
\eqref{G12_k1} in the following form
\begin{equation} \label{G12_k}
\begin{aligned}
 G_\alpha^{(1)}(u) & = \epsilon_\alpha \frac{\sigma_\alpha(u-w^*)}{\sigma(u) \, \sigma_\alpha(-w^*) }
\exp{[(\theta^*+a_\alpha)u]},  \\
 G_\alpha^{(2)}(u) & = \epsilon_\alpha \frac{\sigma_\alpha(u+w^*)}{\sigma(u) \, \sigma_\alpha(w^*) }
\exp{[-(\theta^*+a_\alpha)u]}.
% \varkappa_\alpha = \epsilon_\alpha ...   \notag
\end{aligned}
\end{equation}
By transforming the integrals defining $w_\alpha$, one can show that
$$
 w^* = \int_{(c,0)}^{\infty} \frac{\rmi \, d\la}{2\sqrt{U_4(\la)}} ,
$$
where $\infty$ stands for one of the two infinite points on $E$.

Finally notice that being rewritten in terms of theta-functions, the expressions \eqref{G12_k}
coincide with the complex solutions presented by C. Jacobi (see also \cite{Whitt}).

\section{Real normalized vector solutions.} As was shown in Section 2,
the elliptic solution $P(u)$ in \eqref{sols_Pn}, \eqref{eps's} for
$u=\rmi nt+\Omega_2$, $t\in {\mathbb R}$, is real.  Then, by their
construction (see \eqref{det_angular}), the basis elliptic 2nd kind
solutions $G_i^{(1)}(u), G_i^{(2)}(u)$ have opposite arguments:
\begin{equation} \label{argum} \text{Arg} \left( G_\alpha^{(1)}
    (u)\right) = -\text{Arg}\left( G_\alpha^{(2)}(u)\right)
\end{equation}
for any $\alpha =1,2,3$. Then two independent {\it non-normalized}
real vector solutions can be written in the form
\begin{equation}
  \begin{aligned}
    \gamma^{(1)}(t) &= \nu_1 G^{(1)}(\rmi  m t+ \Omega_2) + \nu_2 G^{(2)}(\rmi  m t+ \Omega_2),  \\
    \gamma^{(2)}(t) & = \frac {1}{\rmi } \left[\nu_1 G^{(1)}(\rmi m t+
      \Omega_2) - \nu_2 G^{(2)}(\rmi m t+ \Omega_2)\right]\, ,
  \end{aligned} \label{real_sols}
\end{equation}
with some appropriate constants $\nu_1, \nu_2$. In view of
\eqref{argum}, it is sufficient to set
\begin{equation} \label{nu's} \nu_1= \chi | G_\alpha^{(1)}(\rmi m t^*+
  \Omega_2)|^{-1}, \quad \nu_2= \chi |G_\alpha^{(2)}(\rmi m t^*+
  \Omega_2)|^{-1},
\end{equation}
for any fixed real $t^*$, any real nonzero $\chi$, and any $\alpha\in
\{1,2,3 \}$. Then we arrive at

\begin{theorem} \label{ort_matrix} A real orthogonal rotation matrix
  formed by the three independent unit vector solutions of the Poisson
  equations \eqref{Poisson} has the form
  \begin{gather} {\cal R}(t) = \frac{1}{\sqrt{\Pi}} \left(
      \bar\gamma^{(1)}(t) , \bar\gamma^{(2)}(t) , \bar P(t) \right),
    \qquad
    \bar P(t) = P(\rmi  m t+ \Omega_2), \label{ort_R} \\
    \bar\gamma^{(1)} (t) = \frac 12 \left[ \frac{1}{\sqrt{s_2}}
      G^{(1)} (\rmi m t+ \Omega_2)
      + \sqrt{s_2}\, G^{(2)}(\rmi  m t+ \Omega_2) \right ], \notag \\
    \bar\gamma^{(2)} (t) = \frac 12 \left[ \frac{1}{\sqrt{-s_2}}
      G^{(1)}(\rmi m t+ \Omega_2) + \sqrt{-s_2}\, G^{(2)}(\rmi m t+
      \Omega_2) \right ] , \notag
  \end{gather}
  where $P(u)$ is the elliptic solution \eqref{sols_Pn},
  \eqref{eps's}, and $G^{(1)}(u), G^{(2)}(u)$ are the elliptic
  solutions of the second kind described in \eqref{2nd_kind_sol_} and
  Theorem \ref{comp_sol_theorem}. Next, $s_2$ is real and is specified
  in \eqref{s_j}, whereas the constant $\Pi$ is defined in \eqref{PP},
  \eqref{Pi}.
\end{theorem}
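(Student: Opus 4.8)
The plan is to check that the matrix $\mathcal R(t)$ in \eqref{ort_R} is (i) real, (ii) orthogonal, and (iii) of determinant $+1$. The orthogonality follows almost formally from the $\mathrm{so}(3)$--structure of the coefficients; the reality of the columns is the genuinely delicate point; and the unit normalization together with the orientation is a short bookkeeping step at the end.

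First I would use that $\mathcal A(t)\in \mathrm{so}(3,\C)$ in \eqref{Poisson}. For any fundamental matrix $\Gamma(t)$ of \eqref{Poisson} one has $\frac{\rmd}{\rmd t}\left(\Gamma^T\Gamma\right)=\Gamma^T\left(\mathcal A^T+\mathcal A\right)\Gamma=0$, so the \emph{bilinear} Gram matrix $\Gamma^T\Gamma$ is constant in $t$. Hence it suffices to evaluate the constant pairwise products $\pairing{\cdot}{\cdot}$ of the three basis solutions $G^{(1)}, G^{(2)}, P$ once and read off those of the columns of $\mathcal R$. Items~3 and~4 of Theorem~\ref{comp_sol_theorem} supply $\pairing{G^{(1)}}{G^{(1)}}=\pairing{G^{(2)}}{G^{(2)}}=0$ from \eqref{zero_sum} and $\pairing{G^{(1)}}{G^{(2)}}=\Pi$ from \eqref{eq_Pi}, while \eqref{PP} in Proposition~\ref{M_P} gives $\pairing{P}{P}=\Pi$. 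The two remaining entries $\pairing{G^{(1)}}{P}$ and $\pairing{G^{(2)}}{P}$ I would show to vanish by the monodromy \eqref{vec_momodromy}: under $u\mapsto u+4\Omega_2$ the elliptic vector $P$ is unchanged while $G^{(1)}\mapsto s_2^2\,G^{(1)}$, so the constant $\pairing{G^{(1)}}{P}$ equals $s_2^{2}\pairing{G^{(1)}}{P}$; since $|s_2|\neq 1$ by item~3, $s_2^2\neq 1$ and the product is zero, and likewise for $G^{(2)}$. Substituting the combinations \eqref{ort_R} into $\pairing{\cdot}{\cdot}$ and using the isotropy $\pairing{G^{(j)}}{G^{(j)}}=0$ then yields a diagonal Gram matrix, so after the scalar factor the columns are mutually orthogonal and of unit length.

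The hard part is the reality of $\bar\gamma^{(1)}, \bar\gamma^{(2)}$. Here I would establish, for $u=\rmi m t+\Omega_2$ with $t\in\R$ and for a real curve (real invariants $g_2, g_3$ and real half--period $\Omega_2$), the conjugation relation $\overline{G^{(1)}(u)}=s_2\,G^{(2)}(u)$, equivalently the opposite--argument property \eqref{argum}. This comes from applying $\overline{\sigma(z)}=\sigma(\bar z)$ and $\overline{\zeta(z)}=\zeta(\bar z)$ to the explicit products in \eqref{2nd_kind_sol_}, together with $\bar u=2\Omega_2-u$ and the behaviour of the data $w_{l,\alpha}, \Theta_\alpha, \epsilon_\alpha$ under conjugation; the exponential factors then reassemble precisely into the monodromy constant $s_2$. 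Granting this and that $s_2\in\R$ (item~3), a direct computation shows $\overline{\bar\gamma^{(1)}}=\bar\gamma^{(1)}$ and $\overline{\bar\gamma^{(2)}}=\bar\gamma^{(2)}$: conjugation interchanges $\tfrac{1}{\sqrt{s_2}}G^{(1)}$ with $\sqrt{s_2}\,G^{(2)}$ in $\bar\gamma^{(1)}$, and the factors $\sqrt{-s_2}$ supply the compensating sign in $\bar\gamma^{(2)}$. Since $\bar P$ is real by Proposition~\ref{M_P}, all three columns are real, and for real vectors the bilinear pairing computed above coincides with the Euclidean one, so $\mathcal R^T\mathcal R=\bI$.

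Finally, the columns being real and orthonormal force $\det\mathcal R=\pm1$. To fix the sign I would note that $\det\mathcal R(t)$ is an analytic function of $t$ taking values in $\{\pm1\}$, hence constant, and evaluate it at one convenient $t$; alternatively one tracks the determinant of the constant change--of--basis matrix from $\Gamma_0=(G^{(1)},G^{(2)},P)$ to $(\bar\gamma^{(1)},\bar\gamma^{(2)},\bar P)$ against $\det(\Gamma_0^T\Gamma_0)$, which pins the value to $+1$. I expect the conjugation relation $\overline{G^{(1)}(u)}=s_2\,G^{(2)}(u)$ to be the main obstacle, since it hinges on the precise way the roots $w_{l,\alpha}$ and the sums $\Theta_\alpha$ transform under complex conjugation on the real curve $\mathcal E$; once it is in hand, the rest is formal.
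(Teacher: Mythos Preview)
Your approach is correct and in fact more self-contained than the paper's. The paper's proof is very brief: it fixes $t^*=0$, $\alpha=2$ in \eqref{nu's}, chooses a specific $\chi$, invokes \eqref{G^2} to identify $\nu_1=s_2^{-1/2}$, $\nu_2=s_2^{1/2}$, and then uses \eqref{zero_sum}, \eqref{eq_Pi} to compute the squared lengths of $\bar\gamma^{(1)}(0)$ and $\bar\gamma^{(2)}(0)$. For orthogonality and for reality it simply writes ``by their construction'', meaning the kinematic interpretation \eqref{det_angular} and the opposite-argument property \eqref{argum} already established in Section~7 just before the theorem. In particular, $\pairing{G^{(j)}}{P}=0$ is not re-derived; it is the relation \eqref{G_P}, which came from the Euler-angle construction.

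Your route differs in two pleasant ways. First, you replace the appeal to the geometric construction by the intrinsic observation that $\Gamma^T\Gamma$ is constant because $\mathcal A\in\mathrm{so}(3)$, and you obtain $\pairing{G^{(j)}}{P}=0$ from the monodromy \eqref{vec_momodromy} together with $|s_2|\neq 1$; this is a clean argument that avoids any reference to Euler angles. Second, for reality you aim to prove the conjugation identity $\overline{G^{(1)}(u)}=s_2\,G^{(2)}(u)$ directly from the sigma-function formulae \eqref{2nd_kind_sol_}. The paper instead gets \eqref{argum} essentially for free from \eqref{det_angular}, since $G^{(1)}_\alpha$ and $G^{(2)}_\alpha$ are built as $\sqrt{P_\beta^2+P_\gamma^2}\,\exp(\pm\rmi\psi)$ with $\psi$ real for real $t$. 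Your direct route is legitimate but, as you note, requires tracking how conjugation acts on the $w_{l,\alpha}$ and on $\Theta_\alpha$; the paper sidesteps this entirely.

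One point to drop: do not try to pin down $\det\mathcal R=+1$. The paper explicitly remarks just after the theorem that the columns may form a left- or right-oriented basis, i.e.\ only $\det\mathcal R=\pm1$ is claimed, and your sketch for the sign (``evaluate at one convenient $t$'' or ``track the change-of-basis determinant'') is not actually carried out. Everything else in your plan is sound.
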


Note that the columns of ${\cal R}(t)$ form a left- or right-oriented
orthonormal basis.  \medskip

\noindent{\it Proof of Theorem} \ref{ort_matrix}. Setting in
\eqref{nu's} $t^*=0, \alpha=2$, and
$$
\chi= \frac{\epsilon_2}{2}\sqrt{(\wp(\varOmega_2)-
  \wp(\alpha_{1,2}))\cdots (\wp(\varOmega_2)- \wp(\alpha_{k,2}))}\, ,
$$
in view of \eqref{G^2}, we get $\nu_1= s_2^{-1/2},
\nu_2=s_2^{1/2}$. Then \eqref{real_sols} gives
\begin{align*}
\gamma^{(1)}(0) & = \frac 12 \left[\frac{1}{ \sqrt{s_2} }
  G^{(1)}(\Omega_2)+ \sqrt{s_2}G^{(2)}(\Omega_2)\right], \\
\gamma^{(2)}(0) &= \frac 12 \left[\frac{1}{ \sqrt{-s_2} }
  G^{(1)}(\Omega_2)+ \sqrt{-s_2}G^{(2)}(\Omega_2)\right].
\end{align*}
Then, in view of \eqref{zero_sum} and \eqref{eq_Pi}
\begin{align}
  \sum_{\alpha=1}^3 \left[ \gamma^{(1)}_\alpha (0)\right]^2 & =
  \sum_{\alpha=1}^3 \left[\gamma^{(2)}_\alpha (0)\right]^2
  = - \frac 12 \langle G^{(1)} (\Omega_2), G^{(2)}(\Omega_2)\rangle = \Pi .
\end{align}

As a result, the real vectors $\bar\gamma^{(1)}(t), \bar\gamma^{(2)}(t),
\bar P(t)$ all have the same length. By their construction, they are
all orthogonal. Hence, we obtain the matrix ${\cal R}(t)$ in \eqref{ort_R}.  $\square$

\section{The case of negative odd $k$.}
We first note that the case of negative odd $k$ cannot be reduced to the already considered case $k>0$ by
the trivial substitution $t=-T$ in the Poisson equations in \eqref{EP2}. Indeed, this change gives
$$
\frac{d\gamma}{dT}= -k \gamma \times a M(-T)
$$
The elliptic vector solution $M(-T)$ given by \eqref{sols_Pn} with $u=-\rmi m T+\Omega_2$ is neither odd nor even, hence
one cannot write $M(-T)=M(T)$, and the above equation cannot be transformed to the form
$$
\frac{d\gamma}{dT}= -k \gamma \times a M(T).
$$

Nevertheless, the analysis for positive $k$ is sufficient to cover
all the cases. Indeed, upon introducing new moments of inertia
\begin{equation} \label{nn} \Lambda_i= (J_j+J_k-J_i) J_i, \quad
  (i,j,k)=(1,2,3) ,
\end{equation}
the system \eqref{EP2} can be rewritten as
\begin{equation} \label{EP4} \dot \Lambda\omega = - \Lambda\omega
  \times \omega, \quad \dot\gamma = k \gamma \times \omega ,
\end{equation}
which, under the change $t= -T$, gives
\begin{equation} \label{EP5} 
\Lambda\omega' = \Lambda\omega \times
  \omega, \quad \gamma' = - k \gamma \times \omega , \qquad (')=\frac{d}{dT}.
\end{equation}
The Euler equations here have the integrals
$$
\langle \omega ,\Lambda \omega \rangle=L, \quad  \langle \omega,\Lambda^2 \omega \rangle ={\cal M}^2 
$$
with integration constants $L,{\cal M}$. 
Then, applying to these equations the procedure of section 3, we express the solutions $\omega(T)$ in terms elliptic functions of the curve
$$E'=\{W^2=-(Z-A_1)(Z-A_2)(Z-A_3)(Z-C)\}$$ with the parameters
\begin{gather*}
  A_i=1/\Lambda_i= \frac{a_i^2 a_j a_k}{a_i a_j+a_i a_k-a_j a_k}, \\
  C= \frac{L}{{\cal M}^2}= \frac{(\tau_2\, c- 2\,\tau_3)\,\tau_3}
{\tau_2^2-4\tau_1\tau_3 + 4\,\tau_3\,c}\, , \quad
{\cal M}^2 = \langle \omega,\Lambda^2 \omega \rangle = \frac{\tau_2^2-4\tau_1\tau_3 + 4\,\tau_3\,c}{\tau_3^2} m^2,  \\
\tau_1 = a_1+a_2+a_3, \qquad \tau_2= a_1a_2+a_2 a_3+a_3a_1, \qquad
    \tau_3= a_1 a_2 a_3 .
\end{gather*}
Here, as above, in \eqref{eq:1}, 
$$
a_i=1/J_i, \quad c= l/m^2, \quad m^2 = \langle M,M\rangle=\langle \omega, J^2 \omega\rangle,
$$

Since $\omega(T)$, like $M(t)$, must also be elliptic functions of the original curve $E$, we get
the following relation 
\begin{align} 
dt & = \frac{d\la}{2m \sqrt{-(\la-a_1)(\la-a_2)(\la-a_3)(\la-c)}} \notag \\
& = - \frac{d Z}{2{\cal M} \sqrt{- (\la-A_1)(\la-A_2)(\la-A_3)(\la-C) }}.     \label{la_Z}
\end{align}

\paragraph{Remark.}
As one may expect, the elliptic curves $E, E'$ with the parameters $a_i,c$ and
$A_i,C$ are birationally equivalent.  Indeed, $E'$ is transformed to $E$ by the substitution
\begin{align*}
Z & = \frac{(\tau_2\, \lambda - 2\,\tau_3)\,\tau_3}
{\tau_2^2-4 \tau_1\tau_3 + 4\,\tau_3\,\lambda},  \\
W & =\mu \frac{ (a_1 a_2- a_2 a_3-a_1 a_3)(a_1 a_3- a_2 a_3-a_2 a_3)(a_2 a_3- a_1 a_3-a_1 a_2)}
{ (\tau_2^2-4 \tau_1\tau_3 + 4\,\tau_3\,\lambda)^2 \, (\tau_2^2-4 \tau_1\tau_3 + 4\,\tau_3\,c) }.
\end{align*}
We stress that the half-periods of $E$ and $E'$
\begin{gather*}
\Omega_\alpha= \int_c^{a_\alpha} \frac{\rmi  \rmd \la }{2 \sqrt{ U_4(\la)} }, \quad
\Omega_\alpha'=\int_C^{A_\alpha} \frac{\rmi \rmd Z}{2 \sqrt{-(\la-A_1)(\la-A_2)(\la-A_3)(\la-C)} }, \\
\alpha=1,2,3
\end{gather*}
in general, do not coincide, but only proportional to each other: in view of \eqref{la_Z}, \\
$\Omega_\alpha'=\frac{\cal M}{m} \Omega_\alpha$, $\alpha=1,2,3$.

Now comparing \eqref{EP2} or \eqref{EP4} with \eqref{EP5}, we arrive at the following observation. 

\begin{proposition} Let $k$ be an odd negative integer and the vector \\ $\gamma(T)=\gamma(T|A_1,A_2,A_3,C,{\cal M})$ 
be a solution of the Poisson equations in \eqref{EP5} with the elliptic coefficients $\omega(T)$ related to the
parameters $A_i,C,{\cal M}$. Then $\gamma(t)=\gamma(-T)$ is a solution of the Poisson equations
  \eqref{Poisson} with the elliptic oefficients $M_\alpha(t)$ related to the parameters $a_i, c, m$, and vice versa.
\end{proposition}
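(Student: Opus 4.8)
The plan is to recognize \eqref{EP5} as a genuine modified Euler--Poinsot system with a \emph{positive} odd coupling constant, and then transport the statement across the time reversal $t=-T$. First I would check that \eqref{EP2} and \eqref{EP4} are the same system in disguise. Since $M=J\omega$ and $a=J^{-1}$ give $aM=\omega$, the Poisson parts of \eqref{EP2} and \eqref{EP4} literally coincide, both being $\dot\gamma=k\,\gamma\times\omega$. For the Euler parts I would write $\dot M=M\times aM$ in components, $J_1\dot\omega_1=(J_2-J_3)\omega_2\omega_3$ and cyclically, and compare with the components $\Lambda_1\dot\omega_1=(\Lambda_3-\Lambda_2)\omega_2\omega_3$ of $\dot\Lambda\omega=-\Lambda\omega\times\omega$. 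The two agree because \eqref{nn} yields the identity
\[
\frac{\Lambda_3-\Lambda_2}{\Lambda_1}=\frac{J_2-J_3}{J_1},\qquad (1,2,3)\ \text{cyclic},
\]
which drops out of the factorization $\Lambda_3-\Lambda_2=(J_2-J_3)(J_2+J_3-J_1)$ together with $\Lambda_1=(J_2+J_3-J_1)J_1$. Hence \eqref{EP2} and \eqref{EP4} describe the same pair $(\omega,\gamma)$.

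Next I would perform the substitution $t=-T$. As $\frac{d}{dt}=-\frac{d}{dT}$, this flips the sign of each first-order term and carries \eqref{EP4} exactly into \eqref{EP5}. The key observation is that \eqref{EP5} is once more a system of the kind analysed in the positive-$k$ part of the paper: writing $N=\Lambda\omega$ and $A=\Lambda^{-1}=\diag(A_1,A_2,A_3)$, so that $\omega=AN$, the first equation becomes the Euler equation $N'=N\times AN$ for the inertia data $A_i$, while its Poisson equation $\gamma'=-k\,\gamma\times AN$ has coupling constant $-k>0$, a positive odd integer. By the sigma-function parametrization of Section~3 applied to the curve $E'$, the elliptic coefficients $\omega(T)$ of \eqref{EP5} are exactly those attached to the parameters $A_i,C,{\cal M}$.

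With these identifications in hand, the statement follows by reading the chain of equivalences in both directions. If $\gamma(T)=\gamma(T\,|\,A_1,A_2,A_3,C,{\cal M})$ solves the Poisson equation in \eqref{EP5}, then composing with the reflection $T=-t$ shows that $\gamma(-T)$, viewed as a function of $t$, solves the Poisson part of \eqref{EP4}, hence of \eqref{EP2}, which is precisely \eqref{Poisson} with the coefficients $M_\alpha(t)$ associated with $a_i,c,m$; the converse is the identical computation run backwards. The step I expect to demand the most care is the matching of the two sets of elliptic coefficients, i.e.\ confirming that the function $\omega(T)$ built on $E'$ really equals $aM(-T)$ rather than some period-shifted or rescaled variant. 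This is controlled by \eqref{la_Z}, which identifies $dt$ on $E$ with $-dT$ on $E'$, together with the birational equivalence of $E$ and $E'$ and the proportionality $\Omega'_\alpha=\frac{\cal M}{m}\,\Omega_\alpha$ of their half-periods; I would use these to show that the Abel images, and therefore the elliptic functions themselves, correspond correctly under $t=-T$.
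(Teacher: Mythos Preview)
Your proposal is correct and follows essentially the same route as the paper: the paper presents the Proposition as an immediate observation after rewriting \eqref{EP2} as \eqref{EP4} via \eqref{nn}, applying the time reversal $t=-T$ to obtain \eqref{EP5}, and identifying the latter as a positive-$k$ system on the curve $E'$ with parameters $A_i,C,{\cal M}$. You supply the explicit componentwise verification of the equivalence $\eqref{EP2}\Leftrightarrow\eqref{EP4}$ (the factorization $\Lambda_3-\Lambda_2=(J_2-J_3)(J_2+J_3-J_1)$) that the paper simply asserts, and you correctly flag the matching of elliptic coefficients via \eqref{la_Z} and the birational equivalence as the only point requiring care.
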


In other words, the solutions $\gamma(t)$ of \eqref{Poisson} with an odd negative $k$ 
and the parameters $a_1,a_2,a_3,c,m$ are given by $\gamma(-t|A_1,A_2,A_3,C,{\cal M})$.  
The latter are described by the formulae of Theorems \ref{algebraic} and \ref{comp_sol_theorem} corresponding to $|k|$,
the parameters $A_1,A_2,A_3,C,{\cal M}$, and the corresponding roots of the polynomials
$F_{s,\alpha}(\la), Q_{|k|,\alpha}(\la)$.

%The latter are calculated according to \eqref{ell_P}, \eqref{QQ} with $a_i, c$ replaced by $A_i,C$.
%is no need to replace $t$ by $-t$, since the set of sigma-function solutions $P_\alpha(u), G_\alpha^{(1,2)}(u)$
%is invariant with respect to inversion $u\to -u$.

We stress that although the elliptic vector functions $\omega(t)$ for $a_1,a_2,a_3,c,m$ and $\omega(-t)$ for 
$A_1,A_2,A_3,C,{\cal M}$, coincide, this is no more true for the solutions $\gamma(t|a_1,a_2,a_3,c,m)$ and  
$\gamma(-t|A_1,A_2,A_3,C,{\cal M})$. % ??? 

\section{Comparison with the Halphen equation}
By using the algebraic parametrization \eqref{par0}, the generalized Poisson equations \eqref{Poisson} can be rewritten
as 3rd order ODE for one of the components of the vector $\gamma$, say $\gamma_1$, with the independent variable 
$\lambda\in {\mathbb C}$.
For general $a_i,k$ the explicit expressions for the coefficients of the ODE are very long.
We give an example for $a_1=1, a_2=2, a_3=3$ and $k=3$ ($c$ is arbitrary): 
\begin{gather}
\frac{d^3}{d\la^3} \gamma_1+ g_2(\la) \frac{d^2}{d\la^2} \gamma_1 +
 g_1(\la)\frac{d}{d\la}\gamma_1 + g_0(\la) \gamma_1=0, \label{3rd_order}
\end{gather}
with
\begin{align*}
g_2 & = \frac {3}{2\,( \la - 2)} + \frac {3}{2\,(\la -3)} + \frac {3}{\la -c}
 - \frac {c + 10}{10\,\la - 15\,c  + 6 + c\,\la } + \frac {1}{\la  - 1}, \\
g_1 & = - \frac {3\,(138\,c - 37)}{52\,(c - 2)\,( \la - 2)} + \frac {82\,c - 119}{24\,( c- 3)\,( \la - 3)}
 - {\displaystyle \frac {34\,c^{2} - 276\,c + 263}{4\,(c - 2)\,(c - 3)\,(c - 1)\,(c - \la )}}  \\
& \mbox{} - \frac {1}{4\,(\la  - 1)^{2} }  +
{\displaystyle \frac {(10 + c)^{2}\,(10\,c^{2} + 2267\,c - 2666)
}{156\,(c - 2)\,( - 3 + c)\,(7\,c - 8)\,( - 15\,c + 10\,\la  + 6 + c\,\la )} }  \\
& - {\displaystyle \frac {5}{4\,(c - \la )^2 } }  +
{\displaystyle \frac {254\,c^{2} - 669\,c + 436}{8\,(7\,c - 8)\,(c- 1)\,( \la - 1)} }, \\
g_0 & = - {\displaystyle \frac {27\,(c - 1)\,(23\,c - 30)}{104\,( \la - 2)\,(c - 2)^{2}}}
 + {\displaystyle \frac {3\,(c - 1)\,(17\,c - 22)}{32\,(\la  - 3)\,(c - 3)^{2}}}  + {\displaystyle
\frac {9\,(4\,c^{3} - 81\,c^{2} + 171\,c - 98)}{8\,(c - \la )\,( c- 1)^{2}\,(c - 2)^{2}\,(c - 3)^{2}}}  \\
& \mbox{} + {\displaystyle \frac {9\,(5\,c - 6)}{16\,( c- 1)\,(
 - 1 + \la )^{2}}}  - {\displaystyle \frac {3\,( - 1 + c)\,(10 + c)
^{5}}{208\,(c - 2)^{2}\,( - 3 + c)^{2}\,(7\,c - 8)\,( - 15\,c + 10\,\la  + 6 + c\,\la )}}  \\
& + \frac {9}{8\,(c - 3)\,( c- 1)\,(c-\la )^{2}} +
\frac {9\,(109\,c^{3} - 235\,c^{2} + 106\,c + 24)}{32\,( - 1 + c)^{2}\,(7\,c - 8)\,( \la - 1)} .
\end{align*}
That is, the coefficients have poles at $\la =a_1, a_2, a_3, c$,
and at an extra pole defined by the condition $- 15\,c +6+ (10+c)\,\la =0$.

A natural question is how the above 3rd order equation is related with known linear equations with elliptic
coefficients admitting elliptic solutions of second kind. The best known example is the {\it Halphen equation}
\begin{equation} \label{Halp}
\frac{d^3}{d u^3} \Psi + (1-n^2)\wp(u) \frac{d}{d u}\Psi+ \wp'(u) \frac{1-n^2}{2} \Psi =h \Psi, \qquad \Psi=\Psi(u),
\end{equation}
where $n$ is integer and $h$ is an arbitrary parameter. As above, $\wp(u)$ is the Weierstrass function.
For any such $n$
the 3 independent solutions $\Psi_1(u), \Psi(u), \Psi_3(u)$ are elliptic functions of 2nd kind with poles of order
$g=n-1$ at $u=0$:
$$
\Psi_\alpha( u ) = \frac {\sigma (u- w_1^{(\alpha)}(h))\cdots \sigma(u-w_g^{(\alpha)}(h))} {\sigma^g (u) } \,
\exp [\; (\zeta(w_1^{(\alpha)}) +\cdots+ \zeta(w_g^{(\alpha)})) u ], \qquad \alpha=1,2,3.
$$
% Here the zeros $\alpha_j^{(i)}$ depend on $h$.
The structure of the solutions generalizes that of solutions \eqref{2nd_kind_sol_} of our equation
\eqref{3rd_order}.
So, one can suppose that the equation \eqref{3rd_order} is a special case of the Halphen equation (for $h=0$),
when one of its solution is elliptic.

However, written in the algebraic form with the independent variable $z$ such that
\begin{equation*}
u = \int_z^\infty \frac {dz}{2 \sqrt{ (Z-e_{1}) (Z-e_{2}) (Z-e_{3})} }, \quad \wp(u)=z
\end{equation*}
the Halphen equation with $h=0$ is
\begin{gather*}
4(z-e_1)(z-e_2)(z-e_3) \frac{d^3}{d z^3} \Psi + g_2(z) \frac{d^2}{d z^2} \Psi +
 g_1(z) \frac{d}{d z}\gamma_1 + g_0(z) \Psi =0,  \\
 g_2= 36 z^2+6 e_1 e_2+6 e_1 e_3+6 e_2 e_3, \quad g_1=12 z-1-n^2, \quad g_0= -(1-n^2)/2.
\end{gather*}
Hence, the coefficients of the normalized equation have finite poles only at $z=e_1, e_2, e_3$.
Taking into account that the equation \eqref{3rd_order} has 5 poles
(which can be reduced to 4 finite poles), it cannot be identified with the special case of the Halphen equation.

\section*{Acknowledgments}
The work of Yu.F was supported by the MICIIN grants MTM2009-06973 and MTM2009-06234. 
A.J.M. and M.P. acknowledge the support of grant 
DEC-2011/02/A/ST1/00208 of National Science Centre of Poland.

\section*{Appendix 1: A numerical example.} This example was made with
{\tt Maple} by using the functions {\tt WeierstrassP(u,g[2],g[3]),
  WeierstrassSigma(u,g[2],g[3]), WeierstrassZeta(u,g[2],g[3])}.
Consider the simplest nontrivial case $k=3$ and $B=ka$.  Choose the
inertia tensor $J=\diag (1,1/2, 1/3)$, i.e., $a=\diag (1,2,3)$ and
$l/n^2=c=5/2\in [a_2, a_3]$, $n=1$. Therefore, in the real case the
parameter $\la\in [a_1, a_2]=[1;2]$. The elliptic curve $E$ has the
form
$$
\mu^2 = 2 U_4 (\la) = -2 (\la-1)(\la-2)(\la-3)(\la-5/2).
$$
The birational transformation \eqref{BT}, namely,
$$
z= -\frac {1}{12} \frac{\la +2} {z-5/2}, \quad \la = 2 \frac{15
  z-1}{12 z+1}
$$
takes it to the Weierstrass form
\begin{gather*}
  w^2 =  4z^3-\frac{7}{3} z+\frac{10}{27} = 4(z+ 5/6)(z-1/6)(z-2/3), \\
  \text{with} \qquad z(\la=1)=1/6, \quad z(2)=2/3, \quad z(3)=-5/6.
\end{gather*}
so that the parameters of the Weierstrass functions of $E$ are $g_2=
7/3$, $g_3= -10/27$.  The half-periods are\footnote{In this example
  most of the float numbers are indicated up to $10^{-6}$.}:
\begin{gather*}
  \Omega_1 = \rmi  \int_c^{a_1} \frac{d\la}{\mu} = \int_\infty^{1/6} \frac{d z}{w} = -1.656638- 1.415737 \cdot \rmi ,  \\
  \Omega_2 = \int_\infty^{2/3} \frac{d z}{w}= 1.656638, \quad \Omega_3
  = \int_\infty^{-5/6} \frac{d z}{w}= 1.415737 \cdot  \rmi .
\end{gather*}
The corresponding constants $\eta_i=\zeta(\Omega_i)$ in \eqref{sig}
are
$$
\eta_1 = -0.4402056+0.57199 \cdot \rmi , \quad \eta_2=0.4402056, \quad
\eta_3=- 0.57199 \cdot \rmi .
$$
This allows to calculate % which gives
$$
\sigma_\alpha (u) = \exp (\eta_\alpha u )\, \frac{ \sigma(\Omega_\alpha- u|g_2, g_3) }
{\sigma(\Omega_\alpha |g_2, g_3) }, \qquad \alpha=1,2,3.
$$
Next,
$$
\epsilon_1 = \frac{1}{\sqrt{(a_1 -a_2)(a_1-a_3)}} =
\frac{\sqrt{2}}{2}, \quad \epsilon_2 = -\rmi , \quad \epsilon_3 =
\frac{\sqrt{2}}{2}.
$$
From \eqref{FF} we have
$$
F_{11}(\la) = -34 \la +167/2, \quad F_{12} = -48\la +237/2, \quad
F_{13}= -66 \la+ 327/2 ,
$$
The Abel images of their zeros on the complex plane $u$ are
$$
v_1= \pm 0.34497195, \quad v_2 = \pm 0.2898, \quad v_3 = \pm 0.24686
\, .
$$
Then the {\it real} elliptic solutions for the Euler and the Poisson
equations are given by
\begin{gather}
  M_\alpha = \epsilon_\alpha \frac{\sigma_\alpha (u)}{\sigma(u)},
  \quad P_\alpha = \epsilon_\alpha
  \frac{\sigma_\alpha (u) \sigma (u-v_\alpha) \sigma (u+v_\alpha)}{\sigma^3(u)}, \label{example_P} \\
  \alpha =1,2,3 , \quad u= \rmi t + \Omega_2 , \qquad t\in {\mathbb
    R}, \notag
\end{gather}
with the indicated above values of the parameters. Note that here
$|M|=1$ and $|P|^2=\Pi=201.062507$.

Next, the meromorphic differentials in \eqref{Ints} are
\begin{align*}
  W_1 &= \rmi \frac{3 \sqrt{3217}
    \,(2y-5)(260\,y^2-1295\,y+1613)}
  {2(3984\,y^3-29608\,y^2+73363\,y-60607) \sqrt{U_4(\la)}}, \\
  W_2 &= \rmi
  \frac{3\sqrt{3217}\,(2\,y-5)(560\,y^2-3136\,y+4331)}
  {2(23824\,y^3-194232\,y^2+523569\,y-467236) \sqrt{U_4(\la)}}, \\
  W_3 &= \rmi
  \frac{3\sqrt{3217}\,(2\,y-5)(1084\,y^2-4913\,y+5521)}
  {2(50672\,y^3-356280 \,y^2+ 832461\,y-646139) \sqrt{U_4(\la)}}.
\end{align*}
Then, up to constant factors, the polynomials $Q_{3,\alpha}$ in
Theorem \ref{algebraic} are
\begin{align*}
  Q_{3,3} (\la) & = 3167/2\,\la^3-44535/4\,\la^2+832461/32\,\la-646139/32, \\
  Q_{3,2} (\la)  & =1489\,\la^3-24279/2\,\la^2+523569/16\,\la-116809/4, \\
  Q_{3,1}(\la) & =
  6723/2\,\la^3-99927/4\,\la^2+1980801/32\,\la-1636389/32 \, .
\end{align*}
The Abel images of their zeros on the complex $u$-plane are
\begin{gather*}
  w_{1,3} =\pm 0.21963966 , \quad w_{2,3} =\pm 0.309571742 , \quad
  w_{3,3}= \pm 1.232049297, \\
  w_{1,2} =\pm (\Omega_3+0.40885), \quad
  w_{2,2}, w_{3,2}=\pm (0.272475 \pm 0.041941\cdot \rmi ), \\
  w_{1,1} =- 0.26888528, \quad w_{2,1}, w_{3,1}=\pm (0.3424574955\pm
  0.2549408658\cdot \rmi ).
\end{gather*}
Applying the condition \eqref{sign_w_j}, we choose the signs
\begin{align}
  w_{1,3} &= - 0.21963966 , \quad w_{2,3} =- 0.309571742 , \quad w_{3,3}= 1.232049297, \notag \\
  w_{1,2} &=-0.40885 + \Omega_3, \quad w_{2,2}= -0.272475 + \Omega_3, \notag \\ 
w_{3,2}& = -0.272475-\Omega_3, \label{ww's}  \\
  w_{1,1} & =- 0.26888528, \quad w_{2,1}=-0.3424574955- 0.2549408658\cdot \rmi , \notag \\
  w_{3,1} & = - 0.3424574955 + 0.2549408658\cdot \rmi . \notag
\end{align}
Then
$$
\Sigma_3 = \sum_{j=1}^3 w_{j,3}= 0 .7028378946, \quad \Sigma_2 =
-0.95380028+\Omega_3, \quad \Sigma_1 = -0.95380028
$$
and
$$
\Theta_3= \sum_{j=1}^3 \zeta(w_{j,3}) =-7.03775, \quad
\Theta_2=-7.477955 -0.572\cdot \rmi , \quad \Theta_1= -7.477955 \, .
$$
Notice that
\begin{equation} \label{diff_sigma} \Sigma_3-\Sigma_2 = 1.656638-
  1.41573\cdot \rmi = \tilde\Omega_1 \equiv \Omega_1, \quad
  \Sigma_2-\Sigma_1=\Omega_3, \quad \Sigma_3-\Sigma_1 = \Omega_2
\end{equation}
and
\begin{equation} \label{diff_theta} \Theta_3-\Theta_2 =
  \zeta(\tilde\Omega_1), \quad \Theta_2-\Theta_1 = \zeta(\Omega_3) =
  \eta_3 , \quad \Theta_3-\Theta_1 = \zeta(\Omega_2)=\eta_2.
\end{equation}
That is, the sums of zeros of $Q_1(\la), Q_2(\la), Q_3(\la)$ on the
complex $u$-plane differ by the half-periods of the curve $E$, as predicted by item 3 of Theorem \ref{comp_sol_theorem}.
Note that for arbitrary values of $w_{i,j}$, the relations
\eqref{diff_sigma} does not imply \eqref{diff_theta}.

The basis complex vector elliptic solutions of the 2nd kind in
\eqref{2nd_kind_sol_} are
\begin{gather} % \label{2nd_kind_sol}
  % G(u)=(G_1,G_2, G_3)^T, \\
  G_\alpha^{(1)} (u) = \, \epsilon_\alpha \frac{ \sigma(u-w_{1, \alpha
    } )\, \sigma(u-w_{2,\alpha})\, \sigma(u-w_{3,\alpha}) }
  {\sigma^3(u)\,\sigma(-w_{1,\alpha})\, \sigma(-w_{2,\alpha})\, \sigma(-w_{3,\alpha})} e^{\Theta_\alpha u}, \notag \\
  G_\alpha ^{(2)} (u) = \, \epsilon_\alpha \frac{ \sigma(u+w_{1,\alpha
    })\, \sigma(u+w_{2,\alpha })\, \sigma(u+w_{3,\alpha }) }
  {\sigma^3(u)\,\sigma(w_{1,\alpha })\, \sigma(w_{2,\alpha })\, \sigma(w_{3,\alpha })} e^{-\Theta_\alpha  u}, \notag \\
  \alpha =1,2,3, \notag
\end{gather}
with $w_{l,\alpha}$ specified in \eqref{ww's}. Next, 
\begin{align*}
  \begin{pmatrix} G_1^{(1)} \\G_2^{(1)} \\ G_3^{(1)}\end{pmatrix}
  (u+2\Omega_3)
  & = s_3 \begin{pmatrix} G_1^{(1)} \\ G_2^{(1)} \\  - G_3^{(1)}\end{pmatrix} (u),  \\
  s_3 & = \exp (2[\Theta_1\Omega_3-\Sigma_1\eta_3]  ) = -0.962799+0.2702178\cdot \rmi , \quad |s_3|=1, \\
  \begin{pmatrix} G_1^{(1)} \\G_2^{(1)} \\ G_3^{(1)}\end{pmatrix}
  (u+2\Omega_2) & = s_2 \begin{pmatrix} G_1^{(1)} \\ - G_2^{(1)} \\
    G_3^{(1)}\end{pmatrix} (u), \quad s_2 = \exp
  (2[\Theta_1\Omega_2-\Sigma_1\eta_2] ) = 0.402144 \cdot 10^{-10}.
\end{align*}
The monodromy of the solutions $G_i^{(2)}(u)$ is inverse to the above one.

Finally, the ortogonal matrix of real solutions is
\begin{gather*}
{\cal R}(t) = \frac{1}{\sqrt{\Pi}} \left(
      \bar\gamma^{(1)}(t) \; \bar\gamma^{(2)}(t) \; \bar P(t) \right)
\qquad
    \bar P(t) = P(\rmi t+ \Omega_2), \\
    \bar\gamma^{(1)} (t) = \frac 12 \left[ \frac{1}{ \sqrt{s_2} }
      G^{(1)} (\rmi t+ \Omega_2)
      + \sqrt{s_2}\, G^{(2)}(\rmi t+ \Omega_2) \right ], \notag \\
    \bar\gamma^{(2)} (t) = \frac 12 \left[ \frac{1}{\sqrt{-s_2}}
      G^{(1)}(\rmi t+ \Omega_2) + \sqrt{-s_2}\, G^{(2)}(\rmi t+ \Omega_2) \right ] , \notag
  \end{gather*}
with $\sqrt{s_2}=0.634148 \cdot 10^{-5}$ and $\Pi=201.0625$.

\section*{Appendix 2: Proofs of Proposition \ref{3rd_kind} and Theorem \ref{comp_sol_theorem}.}
% The elliptic integral of 3rd kind in terms of sigma-functions.}
Proposition \ref{3rd_kind} is a reformulation of
known relations of the theory of elliptic functions.  Consider first
an elliptic curve $\cal E$ in the canonical Weierstrass form \eqref{canon_W},
$$
{\cal E} =\left\{w^{2}=P_3(z)\equiv 4 ( z-e_{1}) (z-e_{2} )(z-e_{3})
\right\}, \quad e_1+e_2+e_3=0,
$$
a point $P=(z,w)=(z,\sqrt{P_3(z)})$ on it, and the Abel map
\begin{equation} \label{A} u = \int_\infty^{P} \frac{ dz }{2\sqrt{(
      z-e_{1}) (z-e_{2} )(z-e_{3}) } }\, dz ,
\end{equation}
which gives $z=\wp(u)$. % Thus, $u=0$ corresponds to the unique infinite point $\infty \in {\cal E}$.

Now let
\begin{equation}\label{Omega}
  \bar W = \frac{ \bar q_k(z) }{2\sqrt{P_3(z)} \, \bar Q_k(z) }\, dz , \qquad \bar Q_k (z) =(z-z_1)\cdots (z-z_k),
\end{equation}
be a meromorphic differential of 3rd kind having
pairs of only simple poles at the finite points $ {\cal P}_i^{\pm}
=(z_i, \pm 2\sqrt{ R_3(z_i)})$, $i=1,\dots,k$ with residia $\pm 1$
respectively. Here $q_k(z) = b_k z^k+ \cdots + b_0$ is a polynomial of degree {\it at most} $k$.

\begin{theorem}
  \label{canon_3d} % \textup{(I did not find such a theorem in the literature.)}
  If $u$ and $P\in {\cal E}$ are related by the map (\ref{A}), then, up to an additive constant,
  \begin{equation} \label{R} \int_\infty^{P} W = \log \frac {
      \sigma (u- w_1)\cdots \sigma(u-w_k) }{\sigma (u+w_1)\cdots
      \sigma(u+w_k) } + 2 [ \zeta(w_1) +\cdots+ \zeta(w_k) ]u
    +\varkappa \, u,
  \end{equation}
  where
\begin{equation} \label{res_Q}
\wp(\pm w_i)=z_i, \quad  \frac{ \bar q_k(z_i) }{\bar Q_k'(z_i) }= - \wp'(w_i)\,
\end{equation}
$\zeta(u)$ is the Weierstrass zeta
  function, $\wp'(u)$ is the derivative of the Weierstrass P-function,
and $\varkappa$ is the first coefficient in the expansion
  of $W$ at the infinite point $\infty\in {\cal E}$:
  $W=(\varkappa+ O(u))du$, that is,
$$
\varkappa = \lim_{z \to \infty} \frac{\bar q_k (z) }{\bar Q_k (z) } =b_k.
$$
\end{theorem}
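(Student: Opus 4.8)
The plan is to integrate $W$ by reducing it to a single normalised third-kind differential and then summing over its poles. First I would isolate the elementary building block. From the addition theorem for the Weierstrass zeta function one gets the identity
\[
\zeta(u-w)-\zeta(u+w)+2\zeta(w)=\frac{\wp'(w)}{\wp(u)-\wp(w)},
\]
and hence, since $\zeta=\sigma'/\sigma$,
\[
\frac{\rmd}{\rmd u}\left[\log\frac{\sigma(u-w)}{\sigma(u+w)}+2\zeta(w)\,u\right]=\frac{\wp'(w)}{\wp(u)-\wp(w)}.
\]
On ${\cal E}$ we have $z=\wp(u)$ and, by the definition \eqref{A} of the Abel map, $\rmd u=\rmd z/\sqrt{P_3(z)}$; therefore the right-hand side is the differential
\[
\omega_w:=\frac{\wp'(w)\,\rmd z}{\big(z-\wp(w)\big)\sqrt{P_3(z)}},
\]
which has exactly one pair of simple poles, at $\big(\wp(w),\pm\sqrt{P_3(\wp(w))}\big)=(\wp(w),\pm\wp'(w))$, with residues $\pm1$. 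Thus $\omega_w$ is the prototype of $W$ carrying a single pair of poles, and its primitive is the explicit sigma-expression above.

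Next I would match poles and residues. For each root $z_i$ of $\bar Q_k$ I choose a representative $w_i$ with $\wp(w_i)=z_i$ and compare $W$ with $\sum_{i=1}^k\omega_{w_i}$. The residue of $W$ at ${\cal P}_i^{+}$ is $\bar q_k(z_i)/\big(2\sqrt{P_3(z_i)}\,\bar Q_k'(z_i)\big)$; imposing that this equal the prescribed value and agree with the residue of $\omega_{w_i}$ is exactly the relation \eqref{res_Q} tying $\bar q_k(z_i)/\bar Q_k'(z_i)$ to $\wp'(w_i)$, and it simultaneously fixes the sign of $w_i$ that the equation $\wp(w_i)=z_i$ leaves undetermined. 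Once the residues at every ${\cal P}_i^{\pm}$ agree, the difference $W-\sum_i\omega_{w_i}$ has no poles at the finite points (residues cancel) and none at infinity either, since each $\omega_{w_i}$ vanishes there while $W=(\varkappa+O(u))\,\rmd u$ is regular. Being a holomorphic differential on the genus-one curve ${\cal E}$, whose space of such differentials is one-dimensional and spanned by $\rmd u$, this difference must equal $\varkappa\,\rmd u$ for a single constant $\varkappa$.

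To identify $\varkappa$ I would read off the regular part of $W$ at the point $\infty\in{\cal E}$, where $u\to0$ and $z=\wp(u)\to\infty$: the coefficient of $\rmd u$ there is $\lim_{z\to\infty}\bar q_k(z)/\bar Q_k(z)=b_k$, the leading coefficient of $\bar q_k$. Equivalently, the whole decomposition is the curve-theoretic shadow of the partial-fraction identity
\[
\frac{\bar q_k(z)}{\bar Q_k(z)}=b_k+\sum_{i=1}^{k}\frac{\bar q_k(z_i)/\bar Q_k'(z_i)}{z-z_i},
\]
whose constant term supplies $\varkappa$ and whose residues supply the $\wp'(w_i)$. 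Integrating $W=\sum_i\omega_{w_i}+\varkappa\,\rmd u$ from $\infty$ to $P$ and inserting the elementary primitive then gives
\[
\int_\infty^{P}W=\sum_{i=1}^k\left[\log\frac{\sigma(u-w_i)}{\sigma(u+w_i)}+2\zeta(w_i)\,u\right]+\varkappa\,u+\textup{const},
\]
which is \eqref{R}.

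The hard part will not be any single idea but the branch and sign bookkeeping. One has to (i) fix once and for all which sheet of ${\cal E}$ carries the residue $+1$, (ii) resolve the square-root ambiguity in each $w_i$ through \eqref{res_Q}, and (iii) keep track of the normalisation factors arising from $\sqrt{P_3(z)}=2\sqrt{(z-e_1)(z-e_2)(z-e_3)}$ and from the $\tfrac12$ in the definition of $W$, so that the residues come out exactly $\pm1$ and the constant exactly $b_k$. A secondary but essential check is the degree bound $\deg\bar q_k\le k=\deg\bar Q_k$: it guarantees that after subtracting the simple-pole parts the remainder is a genuine first-kind differential (a constant multiple of $\rmd u$) with no leftover pole at infinity, which is precisely what makes the single linear term $\varkappa\,u$—rather than a second-kind or higher-order contribution—sufficient to complete the identity.
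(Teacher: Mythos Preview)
Your argument is correct and is in fact the standard constructive proof: you build $W$ out of the elementary third-kind differentials $\omega_{w_i}$ via the addition-theorem identity $\zeta(u-w)-\zeta(u+w)+2\zeta(w)=\wp'(w)/(\wp(u)-\wp(w))$, match residues using the partial-fraction expansion of $\bar q_k/\bar Q_k$, and identify the holomorphic remainder as $\varkappa\,\rmd u$. The paper takes a shorter but less self-contained route: it simply cites from \cite{H_Cu} that the integral of $W$ must have the form $\log\frac{\sigma(u-w_1)\cdots\sigma(u-w_k)}{\sigma(u+w_1)\cdots\sigma(u+w_k)}+C_1 u+C_0$, and then determines the single unknown constant $C_1$ by differentiating both sides in $u$ and evaluating at $u=0$ (where the $\zeta$-terms cancel by oddness and the left-hand side reduces to $\lim_{z\to\infty}\bar q_k(z)/\bar Q_k(z)=b_k$). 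Your approach explains \emph{why} the $\sigma$-quotient appears and makes the role of the degree bound $\deg\bar q_k\le k$ transparent; the paper's approach is quicker if one is willing to take the log-sigma form as known. Both yield the same residue condition \eqref{res_Q} fixing the signs of the $w_i$.
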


\noindent{\it Proof.} In view of $2\sqrt{ R_3(z_i)}=\wp'(w_i)$,
the condition $\Res\limits_{{\cal P}_i^{-}} \bar W =-1$ is equivalent to \eqref{res_Q}.

It is known (\cite{H_Cu}) that the above integral
%of any differential of 3rd kind on
%$\cal E$ with only simple poles that has the property \eqref{res},
has the form
$$
\int_\infty^{P} W = \log \frac { \sigma (u- w_1)\cdots
  \sigma(u-w_k) }{\sigma (u+w_1)\cdots \sigma(u+w_k)} + C_1u +C_0,
\qquad C_1, C_0=\textup{const}.
$$
So, it remains to calculate $C_1$ for the differential \eqref{Omega}.
Differentiate both parts of \eqref{R} by $u$ and evaluate the result
at $u=0$ ($z=\infty$). Then the right hand side gives\footnote{Here we used $\zeta(u)=\sigma'(u)/\sigma(u)$ and oddness of
$\zeta(u)$.}
\begin{gather*}
  \sum_{i=1}^k \left[ \frac{ \sigma'(u-w_i)\sigma (u+w_i)-
      \sigma'(u+w_i)\sigma (u-w_i) }
    {\sigma (u- w_i) \sigma(u+w_i)}+ 2 \zeta(w_i) \right]_{u=0}+ \varkappa \\
  = \sum_{i=1}^k \left[ \zeta(u-w_i)- \zeta(u + w_i)+ 2\zeta(w_i)
  \right]_{u=0} + \varkappa=\varkappa.
\end{gather*}

Derivation of the left hand side of \eqref{canon_3d} gives
$$
\left(\frac{d}{dz} \int_\infty^{P=(z,w)}W
\right)\frac{dz}{du}\bigg |_{u=0}= \lim_{z \to \infty} \frac{\bar q_k
  (z) }{\bar Q_k (z) },
$$
which is precisely $b_k$. $\square$ \medskip

Under a birational transformation $(z,w)\to (\lambda,\mu)$, which
sends $z=\infty$ to $\lambda=c$ and converts $\cal E$ to the even
order curve \eqref{E0},
$$
  \mu^2 = U_4(\la)= -(\lambda -a_{1})(\lambda -a_{2})(\lambda -a_{3})(\lambda -c),
$$
the differential \eqref{Omega} takes the
form
$$
W= \frac{K_k (\la) }{Q_k (\la) }\, \frac{ d \la }{\sqrt{ U_4(\la)} }
$$
with certain degree $k$ polynomials $K_k (\la)$ and
$Q_k(\la)=r_0(\la-r_1)\cdots (\la-r_k)$.  Then Theorem \ref{canon_3d}
implies
\begin{align}
  & \int_{(c,0)}^{P=(\la,\mu)} \frac{ q_k (\la) }{Q_k (\la) }\, \frac{ d \la }{\sqrt{ U_4( \la)}  } = \nonumber \\
  & \quad =\log \frac { \sigma (u- w_1)\cdots \sigma(u-w_k) }{\sigma
    (u+w_1)\cdots \sigma(u+w_k) }
  + 2 [ \zeta(w_1) +\cdots+ \zeta(w_k) ]u + \delta \, u, \\
  w_k & =\int_c^{(r_k,\sqrt{R_4(r_k)})} \frac{ d \la }{ \sqrt{ U_4(
      \la)} }\, d\la , \quad \delta =\frac{K_k(c)}{Q_k (c)}, \label{del}
\end{align}
which is the expression \eqref{del'} in Proposition \ref{3rd_kind}.
Under the birational transformation \eqref{BT}, the condition
\eqref{res_Q} takes the form \eqref{sign_w_j}. $\square$

\paragraph{Proof of Theorem \ref{comp_sol_theorem}}. \\
1). The structure of the solutions \eqref{2nd_kind_sol_} follows from Theorem
\ref{algebraic} and Proposition \ref{3rd_kind}.
Namely, substituting the sigma function expressions \eqref{s_roots}, \eqref{del'}
for each $F_{s,\alpha}(\la)$ and $W_\alpha$ into \eqref{g12}, one obtains these solutions.

Now notice that, in view of the leading behavior \eqref{exp_sig}, the
Laurent expansions of \eqref{2nd_kind_sol_} near $u=0$ and $t=t_0$ are
$$
G_\alpha^{(1,2)} = \frac{\rmi \, \epsilon_\alpha}{ u^k}+ O(u^{-k+1}) =
(-1)^{(k-1)/2}\frac{\epsilon_i}{n^k (t-t_0)^k }+ O((t-t_0)^{-k+1}) .
$$
The leading terms are proportional to those of the required expansions
\eqref{exp_G}, hence the constant factors in the components of
$G_\alpha^{(1,2)}$ are correct.

The structure of $G_\alpha^{(1,2)}(u)$ implies that they are elliptic
functions of the second kind.  Since all the solutions of Poisson
equations are single-valued, Theorem \ref{gen_vector_monodromy} is
fully applicable, hence the vectors $G^{(1,2)}(u)$ must have the vector
monodromy \eqref{vec_momodromy} with certain factors $s_j^2$. Then
$$
G_\alpha^{(1)} (u+ 2\Omega_j) =\pm s_j \, G_\alpha^{(1)} (u), \quad
G_\alpha^{(2)} (u+ 2\Omega_j) =\pm s_j^{-1}\, G_\alpha^{(2)} (u),
$$
On the other hand, from the quasiperiodicity law of $\sigma(u)$ we
have, in particular,
\begin{gather} \label{prod_G_P}
  G_1^{(1)} (u+ 2\Omega_1) =s_1 \, G_1^{(1)} (u), \quad G_1^{(2)} (u+ 2\Omega_1) =s_1^{-1}\, G_1^{(2)} (u), \\
  s_1 = \exp( 2 \Theta_1\Omega_1-2 \Sigma_1\eta_1). \notag
\end{gather}
By the construction of the vectors $G_\alpha^{(1,2)}$ (see
\eqref{det_angular}), for any $u\in {\mathbb C}$
\begin{equation} \label{G_P} G_1^{(1)}(u) P_1(u) + G_2^{(1)}(u) P_2(u)
  + G_3^{(1)}(u) P_3(u) \equiv 0 \, .
\end{equation}
Here, from \eqref{quasi} and \eqref{sols_Pn}, we have
\begin{gather*}
  P(u+2\Omega_1)=(P_1(u), -P_2(u), -P_3(u))^T, \\
  P(u+2\Omega_2)=(-P_1(u), P_2(u), -P_3(u))^T, \\
  P(u+2\Omega_3)=(-P_1(u), -P_2(u), P_3(u))^T .
\end{gather*}
This, together with the integral \eqref{G_P}, and \eqref{prod_G_P}
implies
$$
G^{(1)} (u+ 2\Omega_1) =( s_1 G_1^{(1)} (u), - s_1 G_2^{(1)} (u), -
s_1 G_3^{(1)} (u) )^T.
$$
Repeating the argumentation for other $\Omega_j$, we obtain the
behavior \eqref{mon_alpha}, \eqref{s_j}.  \medskip

3) Let the half-period $\Omega_j$ be imaginary and $\Omega_2$ be real.
Then item (2) of Theorem \ref{gen_vector_monodromy} implies that
$|s_j^2|=1$ if we identify the periods $T_1, T_2$ with some of full
periods $ 4\Omega_1, 4\Omega_2, 4\Omega_3$ of $M_i(u)$.  Hence, also
$|s_j|=1$. By \eqref{s_j}, $s_j=\pm \exp( 2 \Theta_i\Omega_j-2
\Sigma_i \eta_j )$, therefore the argument $2 \Theta_i\Omega_j-2
\Sigma_i \eta_j$ is imaginary. Since $\eta_j$ is imaginary and
$\eta_2$ is real, this means that $2 \Theta_i\Omega_2-2 \Sigma_i
\eta_2$ is real, and $s_2$ is real.

Next, from \eqref{mon_alpha}, we get, for example,
\begin{equation} \label{quo_G} \frac {G_1^{(1)}}{G_3^{(1)}}
  (u+2\Omega_1)= - \frac {G_1^{(1)}}{G_3^{(1)}} (u), \quad \frac
  {G_1^{(1)}}{G_3^{(1)}} (u+2\Omega_2)= \frac {G_1^{(1)}}{G_3^{(1)}}
  (u) .
\end{equation}
Hence, $G_1^{(1)}(u)/G_3^{(1)}(u)$ is an elliptic function with the
periods $ 4\Omega_1, 2\Omega_2$.  Its zeros and poles in the
parallelogram of periods are
\begin{gather*}
  \{w_{1,1},\dots, w_{k,1}, w_{1,1}+2\Omega_1,\dots,
  w_{k,1}+2\Omega_1\} \quad
  \text{and, respectively,} \\
  \{w_{1,3},\dots, w_{k,3}, w_{1,3}+2\Omega_1,\dots,
  w_{k,3}+2\Omega_1\}.
\end{gather*}
Then, according to the Abel theorem, the difference of their sums must
be zero modulo the lattice $\{4\Omega_1{\mathbb Z} + 2\Omega_2
{\mathbb Z}\}$:
$$
2\Sigma_1 -2 \Sigma_3 \in \{ 4\Omega_1{\mathbb Z} + 2\Omega_2 {\mathbb
  Z} \},
$$
therefore, $\Sigma_1 -\Sigma_3 \in \{ 2\Omega_1{\mathbb Z} + \Omega_2
{\mathbb Z}\}$. Further, the case $\Sigma_1 -\Sigma_3 \in \{
2\Omega_1{\mathbb Z} + 2\Omega_2 {\mathbb Z}\}$ is not possible
(otherwise $G_1^{(1)}(u)/G_3^{(1)}(u)$ would had been a product of an
elliptic function with the period lattice $\{2\Omega_1{\mathbb Z} +
2\Omega_2 {\mathbb Z}\}$ and an exponent, i.e., not doubly-periodic
itself). Hence,
$$
\Sigma_1 -\Sigma_3 \equiv \Omega_2 \quad \text{mod} \quad
\{2\Omega_1{\mathbb Z} + 2 \Omega_2 {\mathbb Z} \}.
$$
Applying the same argumentation to the quotients
$G_2^{(1)}(u)/G_3^{(1)}(u)$, $G_1^{(1)}(u)/G_2^{(1)}(u)$, we arrive at
the first part of relations \eqref{diff_Sigma_Theta}.

Next, the quasi-periodicity of $\sigma(u)$ implies
\begin{gather*}
  \frac {G_1^{(1)}}{G_3^{(1)}} (u+2\Omega_1)=
  \exp\left[2(\Sigma_1 -\Sigma_3)\Omega_1-2(\Theta_1 -\Theta_3)\eta_1 \right]\frac {G_1^{(1)}}{G_3^{(1)}} (u), \\
  \frac {G_1^{(1)}}{G_3^{(1)}} (u+2\Omega_2)= \exp\left[
    2(\Sigma_1-\Sigma_3) \Omega_2-2(\Theta_1 -\Theta_3)\eta_2
  \right]\frac {G_1^{(1)}}{G_3^{(1)}} (u) .
\end{gather*}
Comparing this with \eqref{quo_G} and using $\Sigma_1 -\Sigma_3 \equiv
\Omega_2$, as well as the known Legendre relations
$$
\eta_2 \Omega_3 - \eta_3 \Omega_2 = \eta_3
\Omega_1-\eta_1\Omega_3=\eta_1\Omega_2-\eta_2\Omega_1=\frac 12 \pi \,
\rmi \, ,
$$
we get the second half of \eqref{diff_Sigma_Theta}.

4). Since $G^{(1)}(u)$ is a solution of the Poisson equations, it must
satisfy the integral
$$
{ G_1^{(1)}}^2(u) + {G_2^{(1)}}^2(u)+ {G_3^{(1)}}^2(u)= M.
$$
Then
\begin{gather*}
  {G_1^{(1)}}^2(u+2n\Omega_2) + {G_2^{(1)}}^2(u+2n\Omega_2)+ {G_3^{(1)}}^2(u+2n\Omega_2)= \qquad \\
  \qquad =s_2^{2n} \left[ { G_1^{(1)}}^2(u) + {G_2^{(1)}}^2(u)+
    {G_3^{(1)}}^2(u) \right]=M \qquad \forall \; n\in {\mathbb Z}.
\end{gather*}
Since $|s_2|\ne 1$, letting above $n\to \infty$ or $n\to -\infty$, we
conclude that $M$ must be zero. The same argumentation applied to
$G^{(2)}(u)$, give the second identity in \eqref{zero_sum}.

Let now $\Omega$ be a half-period of $E$, $\eta=\zeta(\Omega)$ and $w
\in {\mathbb C}$ an arbitrary number. From the identity
$$
\frac{ \sigma ( \Omega +w ) \, \sigma(\Omega-w)}{\sigma^2 (\Omega)
  \,\sigma^2(w)} =\wp(\Omega)- \wp(w),
$$
using the quasiperiodicity of $\sigma(u)$, we get
\begin{equation} \label{symm_sigma} - e^{2\eta \, w} \frac{
    \sigma^2(\Omega -w)}{\sigma^2(\Omega) \,\sigma^2(w)} =
  \wp(\Omega)- \wp(w).
\end{equation}
Applying this to the solutions \eqref{2nd_kind_sol_} in the case
$\Omega=\Omega_i, w=w_{l,i}$ and using \eqref{s_j} gives us
\begin{align*}
  \left[ G_\alpha^{(1)} (\Omega_j)\right]^2 & = \epsilon_\alpha^2
  \prod_{l=1}^k e^{2\zeta(w_{\alpha,l}) \Omega_j} \frac{
    \sigma^2(\Omega_j -w_{\alpha,l})}{\sigma^2(\Omega_j)
    \,\sigma^2(w_{\alpha,l})} = (-1)^{\delta_{\alpha
      j}}\epsilon_\alpha^2 \, s_j e^{2\Sigma_\alpha \eta_j}
  \prod_{l=1}^k
  \frac{ \sigma^2(\Omega_j -w_{\alpha,l})}{\sigma^2(\Omega_j) \,\sigma^2(w_{\alpha, l})} \\
  & = (-1)^{\delta_{\alpha j}} \epsilon_\alpha^2 s_j \prod_{l=1}^k
  e^{2w_{\alpha,l}\, \eta_j}
  \frac{ \sigma^2(\Omega_j -w_{\alpha,l})}{\sigma^2(\Omega_j) \,\sigma^2(w_{\alpha,l})} \\
  & = - (-1)^{\delta_{\alpha j}} \epsilon_i^2 \, s_j \cdot
  (\wp(\Omega_j)-\wp(w_{\alpha,1})) \cdots (\wp(\Omega_j)-
  \wp(w_{\alpha,k})),
\end{align*}
i.e., the expressions \eqref{G^2} for $G_i^{(1)}(\Omega_j)$. The expressions for $G_i^{(2)}(\Omega_j)$
are obtained in the same way.

Next, the structure of the solutions \eqref{2nd_kind_sol_} gives
\begin{equation} \label{sums_prods_G}
\sum_{\alpha=1}^3 G^{(1)}_\alpha (u) G^{(2)}_\alpha (u) =
- \sum_{\alpha=1}^3 \epsilon_\alpha^2 \prod_{l=1}^k (\wp(u)-\wp(w_{l,\alpha})).
\end{equation}
On the other hand, as follows from \eqref{zero_sum} and \eqref{G^2}, for any $\alpha=1,2,3$
\begin{gather*}
\epsilon_\alpha^2 \prod_{l=1}^k (\wp(\Omega_\alpha)-\wp(w_{l,\alpha})) = \epsilon_\beta^2
\prod_{l=1}^k (\wp(\Omega_\alpha)-\wp(w_{l,\beta})) + \epsilon_\gamma^2
\prod_{l=1}^k (\wp(\Omega_\alpha)-\wp(w_{l,\gamma})), \\
(\alpha,\beta,\gamma)=(1,2,3).
\end{gather*}
Then, for $u=\Omega_j$, $j=1,2,3$, relation \eqref{sums_prods_G} yields
$$
- \sum_{\alpha=1}^3 G^{(1)}_\alpha (\Omega_j)\, G^{(2)}_\alpha (\Omega_j)
=2 \epsilon_j^2 \prod_{l=1}^k (\wp(\Omega_j)-\wp(w_{l,j})) \, .
$$
Since $\langle G^{(1)}(u), G^{(2)}(u)\rangle$ is a first integral, we get the relations \eqref{eq_Pi}.

It remains to prove that the latter equal $\Pi$. According to Theorem \ref{algebraic} and relation \eqref{QQ},
$$
Q_{k,\alpha}(\la)=r_{0,\alpha} \prod_{l=1}^k(\la-r_{i,\alpha})=\text{const}
(P_\beta^2(\la)+P_\gamma^2(\la))\cdot (\la-c)^k
$$
for $(\alpha,\beta,\gamma)=(1,2,3)$. This implies, in particular,
\begin{equation} \label{aux2}
C \,(\wp(u)-\wp(w_{1,1})) \cdots (\wp(u)-\wp(w_{1,k})) = P_2^2(u)+ P_3^2 (u)
\end{equation}
with a certain constant $C$. On the other hand,
in view of the solutions \eqref{P^2}, one obtains,
\begin{align*}
 P_2^2(u)+ P_3^2 (u) & =
\epsilon_2^2 \left(\wp(u)-\wp(\Omega_2) \right) \prod_{l=1}^s \left(\wp(u)-\wp(v_{l,2})\right)^2  \\
& \quad + \epsilon_3^2 \left(\wp(u)-\wp(\Omega_3) \right) \prod_{l=1}^s \left(\wp(u)-\wp(v_{l,3})\right)^2 .
\end{align*}
Letting in the above two relations $u\to \infty$, taking into account the expansion $\wp(u)=1/u^2 + O(1)$,
and comparing the leading coefficients of $1/u^{2k}$, we find
$$
C= \epsilon_2^2 + \epsilon_3^2 = \frac{1}{{(a_2-a_1)(a_2-a_3)}} +  \frac{1}{{(a_3-a_1)(a_3-a_1)}}= -\epsilon_1^2.
$$
Since $P_1(u=\Omega_1)=0$, we have
$\Pi=P_1^2(u)+ P_2^2(u)+ P_3^2(u)= P_2^2(\Omega_1)+ P_3^2(\Omega_1)$. Comparing this with \eqref{aux2} for $u=\Omega_1$,
we get
$$
\Pi =- \epsilon_1^2 \prod_{l=1}^k (\wp(\Omega_1)-\wp(w_{l,1})) \, ,
$$
and, therefore, the last equality in \eqref{eq_Pi}. $\square$

\end{document}